\newcommand{\R}{\mathbb{R}}
\newcommand{\norm}[1]{\left\lVert #1 \right\rVert}
\newcommand{\abs}[1]{\left\vert #1 \right\rvert}
\newcommand{\cov}[1]{\mathrm{Cov}{\left( #1\right)}}
\newcommand{\opnorm}[1]{\norm{#1}_{\operatorname{op}}}
\newcommand{\E}{\mathbb{E}}
\newcommand{\proj}{\operatorname{proj}}
\newcommand{\argmin}{\mathrm{argmin}\,}
\newcommand{\nn}{\nonumber}
\newcommand{\D}{\mathrm{d}}
\newcommand{\Var}{\mathsf{Var}}
\newcommand{\vol}{\mathsf{vol}}
\newcommand{\inner}[2]{\langle #1\,,#2 \rangle}
\newcommand{\brac}[1]{\left( #1\right)}
\newcommand*{\QEDA}{\hfill\hbox{\vrule width1.5ex height1.5ex}}
\newtheorem{thm}{Theorem}[section]
\newtheorem{theorem}[thm]{Theorem}
\newtheorem{lemma}[thm]{Lemma}
\newtheorem{remark}[thm]{Remark}
\begin{document}
	\title{Oracle-based Uniform Sampling from Convex Bodies}
	\date{}
	\author{
    	Thanh Dang  \thanks{Department of Computer Science, University of Rochester, Rochester, NY 14620. 
        (email: {\tt ycloud777@gmail.com}).}
        \qquad
	 	Jiaming Liang \thanks{
        Goergen Institute for Data Science and Artificial Intelligence (GIDS-AI) and Department of Computer Science, University of Rochester, Rochester, NY 14620 (email: {\tt jiaming.liang@rochester.edu}). This work was partially supported by GIDS-AI seed funding.}
	}
\maketitle
	
	\begin{abstract}	
We propose new Markov chain Monte Carlo algorithms to sample a uniform distribution on a convex body $K$. Our algorithms are based on the proximal sampler, which uses Gibbs sampling on an augmented distribution and assumes access to the so-called restricted Gaussian oracle (RGO). The key contribution of this work is an efficient implementation of the RGO for uniform sampling on convex $K$ that goes beyond the membership-oracle model used in many classical and modern uniform samplers, and instead leverages richer oracle access commonly assumed in convex optimization. We implement the RGO via rejection sampling and access to either a projection oracle or a separation oracle on $K$. In both oracle models, we provide non-asymptotic complexity guarantees for obtaining unbiased samples, with accuracy quantified in R\'enyi divergence and $\chi^2$-divergence, and we support these theoretical guarantees with numerical experiments.

	{\bf Key words.}   Uniform sampling, Markov chain Monte Carlo, proximal sampler, restricted Gaussian oracle, projection oracle, separation oracle, rejection sampling.

	\end{abstract}

\section{Introduction}
\label{sec:intro}

Sampling points from convex bodies in high dimension is a classical and central problem in computational geometry, probability, statistics, and optimization. Given a convex body $K\subseteq \R^d$, one likes to generate samples according to some distributions defined on $K$. Past and recent works in the area of constrained sampling in high dimension include \cite{lovasz1990mixing,cousins2018gaussian,kannan1997random,bubeck2018sampling,brosse2017sampling,kannan2009random,kook2024gaussian,girolami2011riemann,lee2018convergence,kook2023condition,gatmiry2023sampling,zhang2020wasserstein,jiang2021mirror,ahn2021efficient,li2022mirror,srinivasan2024fast,lehec2023langevin,gurbuzbalaban2024penalized,gopi2023algorithmic}; and many others.  In this paper, we will focus on the most fundamental case, i.e., uniform sampling on $K$. It is closely related to the problem of efficiently computing the volume of $K$, which is an important problem in computer science for the last few decades, see \cite{cousins2017efficient} and references therein. Moreover, uniform sampling also has a connection to Bayesian inference. If one takes the Gaussian distribution $\mathcal{N}(0,\sigma^2I_d)$, restricts it to $K$ and lets $\sigma$ becomes sufficiently large, then this truncated Gaussian distribution resembles the uniform distribution on $K$. At the same time, truncated Gaussian distribution has been used extensively in Bayesian statistical models with probit regression and censored data, see \cite{Held2006,albert1993bayesian,chib1992bayes,imai2005bayesian,tobin1958estimation}. Due to its importance, many works have been devoted to developing algorithms for this problem. The seminal work by \cite{dyer1991random} proposes the first algorithm to approximate the volume of any convex $K$ in polynomial time and also introduces the celebrated Ball walk to generate samples uniformly on $K$. 
Other algorithms that can be used for uniform sampling from convex $K$ are the Hit-and-Run walk in \cite{smith1984efficient,lovasz1993random,lovasz1999hit}, the coordinate Hit-and-Run walk \cite{turchin1971computation,diaconis2010gibbs,diaconis2012gibbs}, the Dikin walk in \cite{kannan2009random}, the Geodesic walk in \cite{lee2017geodesic}, and diffusion-based samplers in \cite{bubeck2018sampling,brosse2017sampling,gurbuzbalaban2024penalized,chalkis2023truncated}, among others. More details on these algorithms are provided in Appendix~\ref{appendix:uniformsampling}.

In \cite{leestructured21}, Lee, Shen, and Tian develop a Gibbs sampling scheme for log-concave sampling in high dimension. It is named Alternating Sampling Framework, which is commonly referred to as the proximal sampler. Each iteration of the scheme consists of an initial Gaussian step followed by a proximal-type sampling step. The proximal sampler is interesting to study in several ways. First, its Restricted Gaussian Oracle step is the sampling analogue of the proximal map in the proximal point method from optimization. Second, it is a high-accuracy/unbiased sampler (compared to Langevin Monte Carlo/Underdamped Langevin Monte Carlo, which are known to be biased). Finally, thanks to the discovery by \cite{chen2022improved}, the proximal sampler as a discrete-time Markov chain can be viewed through the lens of It\^{o} diffusion processes and analyzed using tools from stochastic calculus. For these reasons, the proximal sampler inspires many follow-up works such as \cite{gopi22a,chen2025rapid,pmlr-v272-mitra25a,wibisono2025mixing,yuan2023class,yuan2025proximal,liangchen2022proximal,fan2023improved,liang2023a} among others.


\textbf{Our contributions.} We develop efficient algorithms to perform uniform sampling on the convex body $K$, and we complement our theoretical guarantees with numerical experiments. Our algorithms are based on the proximal sampler. A central goal of our work is to incorporate more ideas from modern optimization into the design and analysis of sampling algorithms. In particular, we move beyond the classical membership-oracle model, which underlies many random-walk-based samplers including the In-and-Out algorithm by \cite{kook2024inandout}, and instead explore richer oracle structures such as \textbf{projection oracles} and \textbf{separation oracles}, which are standard oracles in convex optimization. More specifically, each outer iteration of the proximal sampler consists of an initial Gaussian step followed by a proximal-type sampling step. The latter is known as the Restricted Gaussian Oracle (RGO) and is the primary challenge in applying the proximal sampler. We propose Algorithm~\ref{alg:RGO:uniform:projection} and Algorithm~\ref{alg:RGO:uniform:separation} as implementations of the RGO via rejection sampling; they respectively require a \textbf{projection oracle} and a \textbf{separation oracle} on $K$. 

We summarize our main theoretical results as follows.
\begin{itemize}
    \item \textbf{Projection-oracle-based proximal sampler.}
    In Section~\ref{sec:projection_uniform}, we assume that $K$ satisfies $B(0,1)\subseteq K\subseteq B(0,R)$ where $B(0,R)$ is an $\ell_2$-ball centered at $0$ with radius $R \ge 1$, and that the initial distribution $\mu_0$ of the proximal sampler (Algorithm~\ref{alg:ASF_uniform}) satisfies a warm-start assumption: $d\mu_0/d\pi\leq M$ where $\pi$ is the uniform distribution on $K$. We implement the RGO via Algorithm~\ref{alg:RGO:uniform:projection}, which uses a \textbf{projection oracle} on $K$ and rejection sampling. Then, by Theorem~\ref{theo:outer} and Theorem~\ref{theo:averagerejection_projection}, to achieve $\epsilon$-accuracy in R\'{e}nyi divergence $\mathcal{R}_q$, Algorithm~\ref{alg:ASF_uniform} combined with Algorithm~\ref{alg:RGO:uniform:projection} requires at most
    \begin{align}
    \label{complexity}
       \mathcal{O}\brac{d^2{C_{\mathrm{LSI}}q\log \brac{2\frac{\log M}{\epsilon}} }}
    \end{align}
    proximal sampler iterations. Here $C_{\mathrm{LSI}}$, the LSI constant of the uniform distribution on $K$, is of order $\mathcal{O}(D^2)$ where $D$ is the diameter of $K$. Our result therefore matches the iteration complexity (in terms of dimension dependence and stepsize dependence) of In-and-Out \cite[Theorem~27]{kook2024inandout} and that of the Ball walk (see Appendix~\ref{appendix:uniformsampling}). Moreover, under the additional assumption that the stepsize $\eta$ of the proximal sampler equals $1/d^2$, Algorithm~\ref{alg:RGO:uniform:projection} and Theorem~\ref{theo:averagerejection_projection} also imply that each iteration of the proximal sampler makes one query to the \textbf{projection oracle} on $K$ and has at most an average of $M(\sqrt{2\pi e}+1)$ proposals for the rejection sampling. Finally, Theorem~\ref{theo:outer} and Theorem~\ref{theo:averagerejection_projection} also provide analogous results in $\chi^2$-divergence.

    \item \textbf{Separation-oracle-based proximal sampler.}
    In Section~\ref{sec:separation_uniform},  we implement the RGO via Algorithm~\ref{alg:RGO:uniform:separation}, which uses a \textbf{separation oracle} on $K$ and rejection sampling. We assume the same conditions on $K$ and $\mu_0$ as above. Then Theorem~\ref{theo:outer} and Theorem~\ref{theo:averagerejection_separation} imply that the number of iterations of the proximal sampler (Algorithm~\ref{alg:ASF_uniform}) to reach $\epsilon$-accuracy in R\'{e}nyi divergence $\mathcal{R}_q$ has the same bound as in \eqref{complexity}. In particular, each iteration of the proximal sampler makes ${\cal O}\!\left(d \log \frac{d \gamma}{\alpha}\right)$ queries to the \textbf{separation oracle} on $K$, where $\gamma=R/\mbox{minwidth}(K)$, $\mbox{minwidth}(K)=\min_{\norm{a}=1}\brac{\max_{y\in K}a^Ty-\min_{y\in K}a^Ty}$, and the constant $\alpha\in(0,1)$ is not too small, i.e., it satisfies $\Pr\brac{\alpha\leq \frac{2}{7 d^3 R^2} }\leq 4 \exp\brac{-\frac{d^2R^2}{8}}$.
    In addition, the average number of proposals is $M\sqrt{2\pi} \exp\brac{\frac{13}{4}+\frac{20}{d}} + M \exp\brac{\frac{9}{4}+\frac{12}{d}}$ for the rejection sampling. Finally, Theorem~\ref{theo:outer} and Theorem~\ref{theo:averagerejection_separation} also provide analogous results in $\chi^2$-divergence.
\end{itemize}

Regarding the numerical experiments presented in Section~\ref{section:numerical}, we compare  our PS+Alg3 (Algorithm~\ref{alg:ASF_uniform}+Algorithm~\ref{alg:RGO:uniform:projection}) with In-and-Out (Algorithm~1 in \cite{kook2024inandout}) on random dense $Z$-polytopes, which are standard examples of convex bodies and are defined in Appendix~\ref{appen_examplesconvexbodies}. We consider two stepsizes: the conservative choice $\eta_A=10^{-5}$ (as recommended by our Theorem~\ref{theo:averagerejection_projection}) and an aggressive practical choice $\eta_B=10^{-2}$. As an accuracy proxy, we report the total variation (TV) error of the $x_1$ marginal against a long hit-and-run reference chain, and we also report elapsed time to capture solver/oracle overhead. Under the conservative stepsize, PS+Alg3 reduces TV faster and reaches a substantially lower final TV than In-and-Out, yielding a better accuracy-time tradeoff despite higher per-iteration cost. Under the aggressive stepsize, In-and-Out remains stable and achieves a final TV comparable to PS+Alg3, while PS+Alg3 reaches a given TV level substantially sooner in elapsed time.

\paragraph*{Comparison to prior works.}
 To the best of our knowledge, there are two prior works on constrained sampling that design samplers using a \textbf{projection oracle} for the constraint set, namely \cite{Bubeck2018,Lehec2023}. In particular, \cite{Lehec2023} studies sampling from a density proportional to $e^{-f}$ over a convex set $K\subset\mathbb{R}^d$, where $f$ is convex and Lipschitz continuous. They assume access to a \textbf{projection oracle} over $K$ and derive an iteration complexity bound of $O\!\left(\frac{d}{\epsilon^{2}}\right)$, where $d$ is the dimension and $\epsilon$ is the target accuracy in Wasserstein distance. Comparing the iteration complexity of \cite{Lehec2023} to \eqref{complexity}, our guarantee is stated in a stronger metric (R\'enyi divergence and $\chi^{2}$ divergence), has better $\epsilon$-dependence but worse dependence on the dimension $d$.

Meanwhile, to the best of our knowledge, there is one earlier work by \cite{kannan1997random} that designs a constrained sampler under a \textbf{separation-oracle} model. The uniform-sampling algorithm therein has oracle complexity $O\!\left(d^{3} \log^{2}\!\frac{d}{\epsilon}\right)$ with respect to the total variation distance \cite[Theorems~2.2 and~4.14]{kannan1997random}. In contrast, our proximal sampler with the RGO implementation given by Algorithm~\ref{alg:RGO:uniform:separation} achieves a comparable complexity $O\!\left(d^{3}\log d \cdot \log\frac{1}{\epsilon}\right)$ in stronger metrics, namely R\'enyi divergence and $\chi^{2}$ divergence.

It is also important to mention the recent work by \cite{kook2024inandout}. Therein, Kook, Vempala, and Zhang  propose the In-and-Out algorithm for uniform sampling from convex $K$. Their algorithm is based on the proximal sampler and they employ a clever smoothing argument to adapt the proof technique by \cite{chen2022improved} to the case of the uniform distribution from $K$. In particular, their implementation of the RGO step is via a \textbf{membership oracle}: they sample $x$ from a Gaussian distribution until $x\in K$ up to a certain number of maximum attempts $N$, at which point the algorithm halts and declares failure. Then they carefully analyze the condition on stepsize and $N$ to make sure the failure probability is small. Subsequent works using the proximal sampler for uniform sampling on $K$ and more general log-concave sampling by reducing it to the problem of exponential sampling from convex bodies have been carried out in \cite{kook2025algodiffusion,kook2025coldstart,kook2025zeroth}. In particular, \cite{kook2025zeroth} is able to get rid of the failure probability in the In-and-Out algorithm by introducing a \textit{restart} step: if one cannot generate $x$ that is in $K$ from a Gaussian distribution after $N$ attempts in the second step of the proximal sampler, then \textit{restart} by returning to the first step.  

Finally, we highlight two additional features of our approach. First, our RGO implementations are unbiased and their outputs lie in $K$ almost surely (see Remark~\ref{remark:feasible}). Thus we avoid the failure probability in \cite{kook2024inandout,kook2025algodiffusion,kook2025coldstart} and provide an alternative to the \textit{restart} procedure of \cite{kook2025zeroth}. Second, to facilitate implementation, Appendix~\ref{appen_examplesconvexbodies} details how to realize \textbf{membership}, \textbf{projection}, and \textbf{separation oracles} for common convex bodies. It turns out in many standard examples, these oracles have comparable computational cost to implement.

\section{Preliminaries}
\label{sec:prelim}

\subsection{Notation, definitions, and assumptions}
Regarding notation, $\norm{\cdot}$ denotes the Euclidean norm on $\R^d$, $\opnorm{\cdot}$ the operator norm, and $I_d$ the $d\times d$ identity. The notation $x=\mathcal{O}(a)$ means $x\le Ca$ for a universal constant $C>0$, and $\tilde{\mathcal{O}}(a)$ allows additional logarithmic factors.

\textbf{Absolute continuity.} For measures $\mu,\nu$ on $(E,\mathcal{F})$, we write $\mu\ll\nu$ if there exists $f:E\to\R$ such that $\mu(A)=\int_A f d\nu$ for all $A\in\mathcal{F}$. The function $f$ is the Radon-Nikodym derivative, denoted $\frac{\D\mu}{\D\nu}$.

\textbf{Metric.} Let $\phi:\R_{\ge0}\to\R$ be convex with $\phi(1)=0$. For probability measures $\mu\ll\nu$ on $(E,\mathcal{F})$, define the $\phi$-divergence $D_\phi(\mu||\nu)=\int_E \phi\!\left(\frac{\D\mu}{\D\nu}\right)\D\nu$. For $\phi(x)=x\log x$ this is the Kullback-Leibler divergence, and for $\phi(x)=x^2-1$ it is the $\chi^2$-divergence. For $q>0$, the $q$-R\'enyi divergence is $\mathcal{R}_q(\mu||\nu)=\frac{1}{q-1}\log\!\bigl(\chi^q(\mu||\nu)+1\bigr)$. The relative Fisher information is $\mathrm{FI}(\mu||\nu)=\int_E \norm{\nabla \log \frac{\D\mu}{\D\nu}}^2 d\mu$.

\textbf{Volumes, distance to a set and $\mbox{minwidth}$.} Let $\vol(K)$ and $\vol_{d-1}(\partial K)$ respectively denote the volumes of $K\subseteq \R^d$ and the boundary set $\partial K \subseteq \R^{d-1}$. Let $d (y,K):=\inf_{z\in S}\|y-z\|$ denote the Euclidean distance from $y$ to $S$. Set $\mbox{minwidth}(K)=\min_{\norm{a}=1}\brac{\max_{y\in K}a^Ty-\min_{y\in K}a^Ty}$.

\textbf{Normalizing constants.} 
For a measurable $\Theta:\R^{m}\to\R\cup\{+\infty\}$, define the normalizing constant $N_\Theta:=\int_{\R^m} e^{-\Theta(z)}dz$ whenever this integral is finite.

\textbf{Isoperimetric inequalities and isoperimetric constants.} We say $\nu$ satisfies the log-Sobolev inequality (LSI) with constant $C_{\mathrm{LSI}}$ if for all $\mu\ll\nu$, $\mathrm{KL}(\mu||\nu)\le \frac{C_{\mathrm{LSI}}}{2} \mathrm{FI}(\mu||\nu)$. We say $\nu$ satisfies the Poincar\'e inequality (PI) with constant $C_{\mathrm{PI}}$ if for any smooth bounded $\psi$, $\Var_{\nu}(\psi)\le C_{\mathrm{PI}} \mathbb{E}_\nu\!\left[\norm{\nabla \psi}^2\right]$.

In \cite[Appendix~C]{kook2024inandout}, the authors summarize recent bounds on isoperimetric constants \cite{cheeger2015lower,kannan1995isoperimetric,lee2024eldan,chen2021almost,klartag2022bourgain,klartag2023logarithmic}. As a consequence, we obtain the following bounds on the LSI and PI constants for the uniform law on a convex body $K$. Recall that a probability measure $\pi$ on $K$ is \textit{isotropic} if, for $X\sim\pi$, we have $\E X_i=0$ and $\E[X_iX_j]=\mathbf{1}_{i=j}$ for all $i,j$.

    \begin{lemma}
    \label{lem:isoconstant}
    \cite[Lemma~18]{kook2024inandout}
    Let $\pi$ be the uniform distribution over $K$ and $K\subset \R^d$ be a convex body with diameter $D$, where $D=\max_{x,y\in K}\norm{x-y}$. Then we have   $C_{\mathrm{PI}}(\pi)=  \mathcal{O}\brac{\opnorm{\cov {\pi}}\log d}$ and $C_\mathrm{LSI}(\pi)=\mathcal{O} \brac{D^2}$. In particular, if $\pi$ is isotropic, then $C_{\mathrm{PI}}(\pi)=  \mathcal{O}\brac{\log d}$ and $C_\mathrm{LSI}(\pi)=\mathcal{O} \brac{D}$.
\end{lemma}

\textbf{Oracles.} A membership oracle decides whether a query $x\in\R^d$ lies in $K$. A separation oracle either certifies $x\in K$ or, if $x\notin K$, returns $g:\R^d\to\R^d$ such that $\inner{g(x)}{x-y}\ge 0$ for all $y\in K$; in particular, it subsumes membership. A projection oracle returns $\proj_K(y)=\argmin\{\norm{x-y}^2: x\in K\}$; clearly $\proj_K(y)=y$ if $y\in K$.

\textbf{Warmness.} For probability measures $\mu,\nu$ on $\R^d$ and $M>0$, we say $\mu$ is $M$-warm w.r.t.\ $\nu$ if $\mu\ll\nu$ and $\frac{\D\mu}{\D\nu}(x)\le M$ for all $x\in\R^d$. In Algorithms~\ref{alg:RGO:uniform:projection} and~\ref{alg:RGO:uniform:separation}, we assume a warm start: $\mu_0$ is $M$-warm w.r.t.\ the uniform distribution $\pi$ on $K$.

\textbf{Standing assumptions for the paper.} In both Sections \ref{sec:projection_uniform} and \ref{sec:separation_uniform}, we assume the following conditions hold:
\begin{itemize}
\item[(A1)] $K$ is a non-empty, closed, and convex set in $\R^d$ such that $B(0,1) \subseteq K \subseteq B(0,R)$ for some $R>1$, where $B(0,R)$ denotes the Euclidean ball centered at the origin with radius $R$. 
    \item[(A2)] the initial distribution $\mu_0$ is $M$-warm with respect to the uniform distribution $\pi$ on $K$, i.e., $\frac{\D\mu_0}{\D \pi}\leq M$.
\end{itemize}

\subsection{The proximal sampler}
The proximal sampler is first proposed by \cite{leestructured21} to sample log-concave distribution in $\R^d$. Given a stepsize $\eta>0$, it aims to sample the target distribution $\pi^X(x)\sim \exp(-f(x))$ by performing Gibbs sampling for the augmented distribution $\pi^{X,Y}(x,y)\sim \exp\brac{-f(x)-\norm{x-y}^2/(2\eta)}$ whose $X$-marginal is the target $\pi^X$. This idea of sampling from a joint distribution to obtain the marginal distribution has been observed in earlier references, for example \cite{cousins2018gaussian}. Each iteration of the proximal sampler alternates between two steps:
\begin{algorithm}[H]
	\caption{Proximal sampler by \cite{leestructured21}}
	\label{alg:ASF}
	\begin{algorithmic}
		\State 1. Sample $y_k\sim \pi^{Y|X}(y|x_k)\propto \exp(-\frac{1}{2\eta}\|x_k-y\|^2)$;
		\State 2. Sample $x_{k+1}\sim  \pi^{X \mid Y}(x \mid y_k)\propto \exp(-f(x)-\frac{1}{2\eta}\|x-y_k\|^2)$.
	\end{algorithmic}
\end{algorithm}
While \cite{leestructured21} proposes the proximal sampler for log-concave sampling, \cite{chen2022improved} extends the assumption of log-concave distributions to distributions satisfying common isoperimetric inequalities such as log-Sobolev inequality or Poincar\'{e} inequality. The crucial observation by \cite{chen2022improved} is that while the proximal sampler is a Markov chain, each iteration of this chain can be viewed as a pair of forward and backward diffusion steps where probabilistic tools for  It\^{o} diffusion processes can be applied. For the proximal sampler, i.e., Algorithm~\ref{alg:ASF}, the first step is generating a Gaussian sample and thus can be easily done, while the second step is non-trivial RGO. In both \cite{leestructured21,chen2022improved}, the authors either assume they have exact access to the RGO, or that $f$ is smooth so that RGO can be easily realized via rejection sampling. Novel realizations of the RGO to either reduce its cost or to relax the smoothness assumption have been investigated in \cite{liangchen2022proximal,gopi22a,fan2023improved,yuan2023class,liang2023a,liang2024proximal} among others.

The uniform distribution on $K$ has density proportional to $\mathbf{1}_K$, where $ \mathbf{1}_K(x)$ equals $1$ on $K$ and $0$ otherwise. In the context of the proximal sampler introduced above, if we take $f(x)=  I_K(x)$ where the indicator function $I_K(x)$ equals $0$ if $x\in K$ and equals $+\infty$ otherwise, then the target of the proximal sampler will be the uniform distribution on $K$. In particular, we have 
       \begin{equation}\label{eq:piXY}
           \pi^X\propto \exp(-I_K(x))=\mathbf{1}_K(x), \qquad \pi^{X,Y}(x,y)\propto \exp\brac{-\frac{1}{2\eta} \norm{x-y}^2}\mathbf{1}_K(x). 
       \end{equation}
 Moreover, denote ${\cal N}(y,\eta I_d)|_K $ the Gaussian distribution ${\cal N}(y,\eta I_d)$ restricted to $K$, i.e.,
\begin{equation}\label{eq:truncated}
  {\cal N}(y,\eta I_d)|_K \propto \exp\brac{-\frac{1}{2\eta}\norm{x-y}^2-I_K(x)}=  \exp\brac{-\frac{1}{2\eta}\norm{x-y}^2}\mathbf{1}_K(x). 
\end{equation}
Then, Algorithm \ref{alg:ASF} for uniform sampling on $K$ turns into the following proximal sampler.
\begin{algorithm}[H]
	\caption{Proximal sampler for the uniform distribution on $K$}
	\label{alg:ASF_uniform}
	\begin{algorithmic}
		\State 1. Generate $y_k\sim \pi^{Y|X}(y|x_k)= \mathcal{N}\brac{x_k,\eta I_d}$;
		\State 2. Generate $x_{k+1}\sim \pi^{X \mid Y}(x \mid y_k)={\cal N}(y_k,\eta I_d)|_K. $
	\end{algorithmic}
\end{algorithm}

The next theorem follows directly from the one-step contraction in $\chi^2$ and R\'enyi divergences proved in~\cite[Theorem 23]{kook2025algodiffusion} (restated as Theorem~\ref{theo:kooketal} in Appendix~\ref{appendix:techlem}). Assuming the RGO step (Algorithm~\ref{alg:ASF_uniform}, Step 2) is available and exact, we obtain the iteration complexity of the proximal sampler in the next theorem. The proof is in Appendix~\ref{appendix:techlem}. 


\begin{theorem}
    \label{theo:outer}
 Let $K\subset \R^d$ be a convex set. Assume Algorithm \ref{alg:ASF_uniform} starts from an $M$-warm distribution $\mu^X_0$, i.e., $\frac{\D\mu_0}{\D \pi^X}\leq M$. Let $\epsilon>0$. Denote $C_{\mathrm{PI}}$ and $C_{\mathrm{LSI}}$ respectively the Poincar\'{e} constant and the log Sobolev constant of the uniform distribution $\pi^X$ on $K$. Then,

\begin{enumerate}[label=\alph*)]
    \item with respect to the R\'{e}nyi divergence $\mathcal{R}_q$ and $q\geq 1$,  the algorithm can achieve $\epsilon$-accuracy within $ O\brac{\frac{1}{\eta}{C_{\mathrm{LSI}}q\log \brac{2\frac{\log M}{\epsilon}} }}$
iterations, where, per Lemma~\ref{lem:isoconstant}, $C_{\mathrm{LSI}}=\mathcal{O}(D^2)$ in general and $C_{\mathrm{LSI}}=\mathcal{O}(D)$ if $\pi^X$ is isotropic;

\item with respect to the $\chi^2$-divergence, the algorithm reaches $\epsilon$-accuracy in $O\brac{\frac{1}{\eta}{C_{\mathrm{PI}}\log \brac{2\frac{M^2+1}{\epsilon}}  } } $
iterations, where, per Lemma~\ref{lem:isoconstant}, $C_{\mathrm{PI}}(\pi)=  \mathcal{O}\brac{\opnorm{\cov {\pi}}\log d}$ in general and $C_{\mathrm{PI}}(\pi)=  \mathcal{O}\brac{\log d}$ if $\pi^X$ is isotropic. 
\end{enumerate}

\end{theorem}
The iteration complexities provided in Theorem~\ref{theo:outer} have not taken into account methods to implement the RGO (Step 2 in Algorithm \ref{alg:ASF_uniform}) and the costs associated with them. Our method for the RGO implementation in the upcoming sections is based on rejection sampling, which requires the construction of a suitable sampling proposal that is close to the target $\pi^{X|Y}(x|y)={\cal N}(y,\eta I_d)|_K$ in Algorithm \ref{alg:ASF_uniform}. We will see that one can construct quite natural proposals if given access to either a projection oracle or a separation oracle on $K$.

\section{Projection oracle-based proximal sampling} \label{sec:projection_uniform}

This section aims to implement the RGO step, i.e., Step 2 in Algorithm \ref{alg:ASF_uniform}, via rejection sampling and the projection oracle of $K$.
At the $k$-th iteration, Step 1 in Algorithm \ref{alg:ASF_uniform} generates from ${\cal N}(x_k,\eta I_d)$ a point $y:=y_k$, which is fixed in RGO. Then the RGO step is supposed to sample from the truncated Gaussian ${\cal N}(y,\eta I_d)|_K$. To implement RGO by rejection sampling, we need to construct a proposal that is both easier to sample than ${\cal N}(y,\eta I_d)|_K$ and also reasonably close to ${\cal N}(y,\eta I_d)|_K$ in order to ensure the acceptance probability is high, or equivalently, the number of proposals is low.
Examining the equivalent formulas of ${\cal N}(y,\eta I_d)|_K$ in \eqref{eq:truncated}, one can easily figure out that $\mathcal{N}(y,\eta I_d)|_K$ concentrates at
\begin{equation} \label{regularizedmap}
    \underset{x\in \R^d}\argmin \left\{\Theta_y^{\eta,K}(x) := I_K(x) + \frac{1}{2\eta}\|x-y\|^2\right\},
\end{equation}
which is precisely the projection of $y$ onto $K$, i.e., $\proj_K(y)$.  
Inspired by this observation, the proposal we choose for the rejection sampling is the Gaussian distribution ${\cal N}(\proj_K(y),\eta I_d)$.

Below is our implementation of RGO via the projection oracle $\proj_K$ and rejection sampling. Let ${\cal U}[0,1]$ denote the uniform distribution on $[0,1]$. 

\begin{algorithm}[H]
	\caption{Projection oracle-based implementation of RGO} 
	\label{alg:RGO:uniform:projection}
	\begin{algorithmic}
		\State 1. Generate  $X\sim \mathcal{N}\brac{\proj_K(y),\eta I_d}$ and $U\sim {\cal U}[0,1]$. 
		\State 2. If 
		\begin{equation}\label{eq:event_uniform}
			U\leq \exp \brac{-\frac{1}{\eta}\inner{X-\proj_K(y)}{\proj_K(y)-y} } \mathbf{1}_K(X), 
		\end{equation}
		then accept $X$; otherwise, reject $X$ and go to step 1. 
	\end{algorithmic}
\end{algorithm}

\begin{remark}
\label{remark:feasible} 
By Lemma~\ref{lem:rejection} (Appendix~\ref{appendix:techlem}), the output $X$ of Algorithm~\ref{alg:RGO:uniform:projection} has law $\pi^{X|Y}$, so the RGO implementation is unbiased. Moreover, the acceptance test \eqref{eq:event_uniform} enforces feasibility: if $X\notin K$ then $\mathbf{1}_K(X)=0$, hence $\Pr(\text{$X$ accepted and }X\notin K)=\Pr(U\le 0)=0$. Section~\ref{sec:separation_uniform} gives an alternative RGO implementation via a separation oracle that also accepts only feasible points.

In contrast, \cite{kook2024inandout} draws $x_i\sim{\cal N}(y,\eta I_d)$ up to $\tilde{\cal O}(d^2)$ times and accepts the first $x_i\in K$, declaring failure otherwise \cite[Remark~2]{kook2024inandout}. This membership-oracle approach has nonzero failure probability, noting that this failure is removed by \cite{kook2025zeroth} via a \textit{restart} procedure.  Here, we are offering an alternative to the \textit{restart} procedure by using a projection oracle or a separation oracle on $K$.

\end{remark}

In the next lemma (proved in Appendix~\ref{appendix:techlem}), we verify that the acceptance test \eqref{eq:event_uniform} is well-defined. We also introduce a function $\mathcal{P}_1$, which will appear naturally in later rejection-sampling analysis.

\begin{lemma}\label{lem:compareP1} 
Assume condition (A1) holds. Then for every $x\in \R^d$, we have 
\[ -I_K(x)-\frac{1}{\eta}\inner{x-\proj_K(y)}{\proj_K(y)-y} \le 0\]
and hence the acceptance test \eqref{eq:event_uniform} is well-defined.
Moreover, \eqref{eq:event_uniform} is equivalent to  $U \le \exp(\mathcal{P}_1(x)- \Theta_y^{\eta,K}(x))$
where $\Theta_y^{\eta,K}$ is as in \eqref{regularizedmap} and 
\begin{equation}\label{def:P1}
    \mathcal{P}_1(x)
    = \frac{1}{2\eta}\|x-\proj_K(y)\|^2 + \frac{1}{2\eta}\|\proj_K(y)-y\|^2. 
\end{equation}
\end{lemma}

The following lemma is a key technical contribution. It enables the bounds on the average number of proposals of Algorithm \ref{alg:RGO:uniform:projection} in Theorem~\ref{theo:averagerejection_projection} and of Algorithm \ref{alg:RGO:uniform:separation} in Theorem~\ref{theo:averagerejection_separation}.

\begin{lemma}\label{lem:intR}
Let $\tau\ge 0$ be given and assume condition (A1) holds, then we have
\[
      \int_{\R^d} \exp\brac{-\frac{\brac{\|\proj_K(y)-y\|-\tau}^2}{2\eta}}\D y 
    \le \vol(K) \left[\exp\left(\frac{\eta d^2}{2}+\tau d\right) \sqrt{2\pi \eta d^2}+\exp\brac{-\frac{\tau^2}{2\eta}}\right]. 
\]
\end{lemma}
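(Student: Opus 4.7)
The plan is to split the integral according to whether $y\in K$ or not. On $\{y\in K\}$ we have $\proj_K(y)=y$, so the integrand equals $\exp(-\tau^2/(2\eta))$ and this piece contributes exactly $\vol(K)\exp(-\tau^2/(2\eta))$, matching the second summand on the right-hand side. It therefore remains to show
\[
\int_{\R^d\setminus K}\exp\brac{-\frac{(d(y,K)-\tau)^2}{2\eta}}\,\D y \;\le\; \vol(K)\exp\brac{\frac{\eta d^2}{2}+\tau d}\sqrt{2\pi\eta d^2},
\]
where $d(y,K):=\|\proj_K(y)-y\|$.

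For the exterior piece I would apply the coarea formula to the $1$-Lipschitz distance function $y\mapsto d(y,K)$, whose gradient has unit norm a.e.\ on $\R^d\setminus K$. Setting $K_r := K+rB(0,1)$, the integral rewrites as
\[
\int_0^\infty \exp\brac{-\frac{(r-\tau)^2}{2\eta}}\,S(r)\,\D r,\qquad S(r):=\vol_{d-1}(\partial K_r).
\]

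The heart of the argument is a Steiner-type bound on $S(r)$ obtained from two standard facts about convex bodies. First, since $B(0,1)\subseteq K$ and $K$ is convex, $rB(0,1)\subseteq rK$ and $K+rK=(1+r)K$, so $K_r\subseteq (1+r)K$ and in particular $\vol(K_r)\le (1+r)^d\vol(K)$. Second, any convex body $C\subseteq\R^d$ containing a ball of radius $\rho$ centered at $x_0$ satisfies $\vol_{d-1}(\partial C)\le d\,\vol(C)/\rho$; this follows from the divergence theorem applied to $y\mapsto y-x_0$, combined with the support-type inequality $\inner{y-x_0}{n(y)}\ge \rho$ on $\partial C$. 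Applying this with $C=K_r$ and $\rho=1+r$ (since $B(0,1+r)\subseteq K_r$) yields
\[
S(r)\;\le\;\frac{d\,\vol(K_r)}{1+r}\;\le\;d\,(1+r)^{d-1}\vol(K).
\]

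Using $(1+r)^{d-1}\le e^{(d-1)r}$ and completing the square in $r$ gives
\[
-\frac{(r-\tau)^2}{2\eta}+(d-1)r \;=\; -\frac{(r-\tau-\eta(d-1))^2}{2\eta}+\tau(d-1)+\frac{\eta(d-1)^2}{2}.
\]
Enlarging the range of $r$ to all of $\R$ bounds the remaining Gaussian integral by $\sqrt{2\pi\eta}$, and then $d\sqrt{2\pi\eta}=\sqrt{2\pi\eta d^2}$. Because $\tau\ge 0$ and $d\ge 1$, the monotone estimates $\tau(d-1)\le\tau d$ and $\eta(d-1)^2/2\le \eta d^2/2$ deliver the desired bound. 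The only genuinely nontrivial step is the Steiner-type estimate on $S(r)$; the coarea reduction and the Gaussian computation are routine. A minor technical point is the regularity needed for the surface-to-volume inequality, which can be handled by approximating $K$ by smooth convex bodies.
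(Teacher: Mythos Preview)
Your proposal is correct and follows essentially the same route as the paper: split over $K$ and $K^c$, apply the coarea formula on $K^c$, bound $\vol_{d-1}(\partial K_r)$ by $d(1+r)^{d-1}\vol(K)$, and finish by completing the square in a one-dimensional Gaussian integral. The only cosmetic difference is how the Steiner-type bound is derived: the paper uses $K_r\subseteq (1+r)K$ together with monotonicity of surface area for nested convex bodies and then the inequality $\vol_{d-1}(\partial K)\le d\,\vol(K)$ (its Lemma~\ref{lem:vol}), whereas you apply the surface-to-volume inequality directly to $K_r$ with inscribed radius $1+r$; both yield the same estimate.
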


\begin{proof}
Since $\proj_K(y)=y$ for $y\in K$, it follows that
\begin{equation}\label{ineq:intK}
    \int_K \exp\brac{-\frac{\brac{\|\proj_K(y)-y\|-\tau}^2}{2\eta}}\D y 
    = \vol(K)\exp\brac{-\frac{\tau^2}{2\eta}}.
\end{equation}
    Next, let us set $K_\delta=\{x\in \R^d:d(x,K)\leq \delta\}$ where $d(x,K)$ denotes the distance from $x$ to $K$. Then, by the co-area formula, we can write
	\begin{align}
		\int_{K^c} \exp\left(-\frac{\brac{\|\proj_K(y)-y\|-\tau}^2}{2\eta}\right) \D y & = \int_{K^c} \exp\left(-\frac{\brac{d(y,K)-\tau}^2}{2\eta}\right) \D y \nn \\
		& = \int_0^\infty \exp\left(-\frac{\brac{\delta-\tau}^2}{2\eta}\right) \vol_{d-1}(\partial K_\delta) \D \delta. \label{restrictedgaussianintegral}
	\end{align}
	It follows from $B(0,1) \subseteq K$ in condition (A1) that $K_\delta = K+\delta B(0,1) \subseteq (1+\delta) K$. This relation, the fact that $(1+\delta)^d \le \exp(\delta d)$, and Lemma~\ref{lem:vol} with $K=K_\delta$ together imply that
    \begin{align}
    \label{ineq:volpartialKdelta}
    \vol_{d-1}(\partial K_\delta) \stackrel{\eqref{ineq:vol}}\le d\vol(K_\delta) \le (1+\delta)^d d \vol(K) \le e^{\delta d} d \vol(K).
    \end{align}
Plugging this inequality~\eqref{ineq:volpartialKdelta} into \eqref{restrictedgaussianintegral}, we obtain
	\begin{align}
    \label{boundKc}
		&\int_{K^c} \exp\left(-\frac{\brac{\|\proj_K(y)-y\|-\tau}^2}{2\eta}\right) \D y 
        \stackrel{\eqref{restrictedgaussianintegral},\eqref{ineq:volpartialKdelta}}\le d \vol(K) \int_0^\infty \exp\left(-\frac{\brac{\delta-\tau}^2}{2\eta} + \delta d\right) \D \delta \nonumber \\
		=& d \vol(K) \exp\left(\frac{\eta d^2}{2}+\tau d\right)\int_0^\infty \exp\left(-\frac{1}{2\eta} (\delta-\tau - \eta d)^2 \right) \D \delta \nonumber\\
		=& d \vol(K) \exp\left(\frac{\eta d^2}{2}+\tau d\right)\int_{-\eta d-\tau}^\infty \exp\left(-\frac{a^2}{2\eta}  \right) \D a 
		\le \vol(K) \exp\left(\frac{\eta d^2}{2}+\tau d\right) \sqrt{2\pi \eta d^2}, 
	\end{align}
where $a=\delta -\tau- \eta d$. The stated lemma finally follows from combining \eqref{boundKc} and \eqref{ineq:intK}.
\end{proof}

\begin{remark}
\label{rem:alternative}
    Regarding the proof argument above, we provide in Lemma~\ref{lem:alternativeintbypart} an alternative way of bounding the terms in \eqref{restrictedgaussianintegral}, which uses integration by parts and the fact that $\frac{\D}{\D\delta} \vol(K_\delta)=\vol_{d-1}(\partial K_\delta)$ for $\delta\geq 0$ a.s.. 
\end{remark}

Next, we bound the average number of proposals in Algorithm~\ref{alg:RGO:uniform:projection} per each iteration of Algorithm~\ref{alg:ASF_uniform}. With the stepsize $\eta=1/d^2$, this average is $\mathcal{O}(1)$ and matches the dimension scaling in \cite[Theorem~27]{kook2024inandout}.

\begin{theorem}
\label{theo:averagerejection_projection} Assume conditions (A1) and (A2) hold, and consider Algorithm \ref{alg:ASF_uniform} with stepsize $\eta =1/d^2$.  Then, the average number of proposals in Algorithm \ref{alg:RGO:uniform:projection} is bounded by $M(\sqrt{2\pi e}+1)$.
\end{theorem}

\begin{proof}
 Denote $\mu^k$ the distribution of $y=y_k$ for the first step of Algorithm \ref{alg:ASF_uniform}. Write $n_y$ the average number of proposals conditioned on $y$ in the rejection sampler that is Algorithm~\ref{alg:RGO:uniform:projection}. Per Lemma~\ref{lem:rejection}, the average number of proposals is $\E_{\mu^k}[n_y]$ where the formula of $n_y$ is given in \eqref{eq:generic-accept-rate}. The fact that $d\mu^k/d\pi^Y\leq M$ from Lemma~\ref{lem:warmstart} implies $\E_{\mu^k}[n_y] \le M \E_{\pi^Y}[n_y]$, 
and hence we will focus on bounding $\E_{\pi^Y}[n_y]$. In view of \eqref{eq:generic-accept-rate}, the latter expression becomes
\begin{align*}
    \E_{\pi^Y}[n_y]  = \E_{\pi^Y}\left[\frac{N_{\mathcal{P}_1}}{N_{\Theta_y^{\eta,K}} }\right]= \int_{\R^d} \frac{\int_{\R^d} \exp(-\mathcal{P}_1(x)) \D x}{\int_K \exp\left(-\frac{1}{2\eta}\|x-y\|^2\right) \D x} \pi^Y(y) \D y. 
\end{align*}
 Via Lemma~\ref{lem:gaussianint}(a), it is easy to get  
 \[
 \int_{\R^d} \exp(-\mathcal{P}_1(x)) \D x = (2\pi \eta)^{d/2} \exp\left(-\frac{1}{2\eta}\|\proj_K(y)-y\|^2\right). 
 \]
 Furthermore, via \eqref{eq:piXY} and Lemma~\ref{lem:gaussianint}(a), one obtains
\begin{align}
\label{for:piy}
          \pi^Y(y) = \frac{\int_{\R^d} \pi^{X,Y}(x,y) \D x}{\int_{\R^d} \int_{\R^d} \pi^{X,Y}(x,y) \D x \D y} \stackrel{\eqref{eq:piXY}}= \frac{1}{\vol(K) (2\pi \eta)^{d/2}} \int_K \exp\left(-\frac{1}{2\eta}\|x-y\|^2\right) \D x. 
\end{align}
The last three identities yield $	\E_{\pi^Y}[n_y] 
	= \frac{1}{\vol(K)}  \int_{\R^d} \exp\left(-\frac{1}{2\eta}\|\proj_K(y)-y\|^2\right) \D y$. 
	Finally, it follows from Lemma~\ref{lem:intR} with $\tau =0$ that
	\[
	\E_{\pi^Y}[n_y] \le \exp\left(\frac{\eta d^2}{2}\right) \sqrt{2\pi \eta d^2} +1\quad\text{and}\quad \E_{\mu^k}[n_y] \le M\exp\left(\frac{\eta d^2}{2}\right) \sqrt{2\pi \eta d^2}+M.
	\]
	The conclusion immediately follows by taking $\eta = 1/d^2$. This completes the proof. 
\end{proof}


\section{Separation oracle-based proximal sampling}

\label{sec:separation_uniform}

In Section \ref{sec:projection_uniform}, when a projection oracle is available, one get an exact solution to $\argmin_{x\in \R^d} \Theta^{\eta,K}_y(x)$ as $\proj_K(y)$, and the Gaussian proposal for the rejection sampling can thus be centered at $\proj_K(y)$. When $\proj_K$ is not available, we propose to use a state-of-the-art Cutting Plane method by \cite{jiang2020cuttingplane}, which uses a separation oracle on $K$ to find an approximate solution $\hat{x}$ of $\argmin_{x\in \R^d} \Theta^{\eta,K}_y(x)$. At this point, using $\hat{x}$, we do rejection sampling with the proposal
\[
\nu(x)\ \propto\ \exp\left(-\frac{1}{2\eta}\left[\|x-\hat x\|^2-2\sqrt{\frac{2\eta}{d}}\|x-\hat x\|\right]\right).
\]
As in Section~\ref{sec:projection_uniform}, this proposal is centered at a high-concentration point $\hat x$, but it is not Gaussian. Sampling from $\nu$ is nevertheless simple: it reduces to a one-dimensional sampling problem (see Lemma~\ref{lem:proposalsampling} in Appendix~\ref{appendix:secseparation}).

Below are our RGO implementation assuming access to a separation oracle on $K$.
\begin{algorithm}[H]
	\caption{Separation oracle-based implementation of RGO} 
	\label{alg:RGO:uniform:separation}
	\begin{algorithmic}
		\State 1. Compute a $(1/d)$-solution $\hat{x}$ of $\argmin_{x\in \R^d} \Theta^{\eta,K}_y(x)$ via the Cutting Plane method by \cite{jiang2020cuttingplane}.
		\State 2. Generate $U\sim {\cal U}[0,1]$ and $X\sim \nu(x)$ via Algorithm \ref{alg:separation:Psamplingforseparation} in Appendix \ref{appendix:generatesampleseparation}.
		\State 3. If 
		\begin{equation}\label{eq:event_uniform:separation}
			U \le \exp\brac{\mathcal{P}_2(X) - \Theta_y^{\eta,K}(X)},
		\end{equation}
		then accept $X$; otherwise, reject $X$ and go to step 2. The function $\mathcal{P}_2$ is defined in \eqref{def:P2} below.
	\end{algorithmic}
\end{algorithm}

Lemma~\ref{lem:cuttingplanealg:deltasolution} in Appendix~\ref{appendix:cuttingplane} bounds the number of separation-oracle calls needed to produce $\hat x$ in Step~1. Lemma~\ref{lem:rejection} in Appendix~\ref{appendix:techlem} shows that the rejection sampling is unbiased.

The following result (proved in Appendix~\ref{proof:alltheP}) ensure the acceptance test at \eqref{eq:event_uniform:separation} is well-defined. 

\begin{lemma} \label{lem:comparealltheP}
Assume condition (A1) holds. Recall $\Theta^{\eta,K}_y$ and $\mathcal{P}_1$ defined in \eqref{regularizedmap} and \eqref{def:P1}, respectively. Set
\begin{equation}\label{def:P2}
    \mathcal{P}_2(x):=\frac{1}{2\eta}\brac{\norm{x-\hat{x}}^2+\norm{\hat{x}-y}^2-2\sqrt{\frac{2\eta}{d}}\brac{\norm{x-\hat{x}}+\norm{\hat{x}-y}}-\frac{12\eta}{d}}. 
\end{equation}
Then, for every $x\in \R^d$, $   \Theta^{\eta,K}_y(x)\ge \mathcal{P}_1(x)\ge \mathcal{P}_2(x)$. In particular, the fact that $\Theta^{\eta,K}_y\geq \mathcal{P}_2$ ensures the acceptance test \eqref{eq:event_uniform:separation} is well-defined. 
\end{lemma}

Next, we bound the average number of proposals and the number of separation-oracle queries in Algorithm~\ref{alg:RGO:uniform:separation} per each iteration of Algorithm~\ref{alg:ASF_uniform}. The proof is deferred to Appendix~\ref{appendix:proofpropseparation}.

\begin{theorem}
    \label{theo:averagerejection_separation}
    Assume conditions (A1), (A2), and the stepsize $\eta =1/d^2$. Then regarding Algorithm~\ref{alg:RGO:uniform:separation}, 

    \begin{itemize}
        \item[a)] the average number of proposals in Algorithm \ref{alg:RGO:uniform:separation} is no more than $\sqrt{2\pi} M\exp\brac{\frac{13}{4}+\frac{20}{d}} + M \exp\brac{\frac{9}{4}+\frac{12}{d}}$. 
        \item[b)] there are at most ${\cal O}\left(d \log \frac{d \gamma}{\alpha}\right)$ queries to the separation oracle on $K$, where $\gamma=\frac{R}{\mbox{minwidth}(K)}$ and $\alpha\in (0,1)$ satisfies the concentration inequality $    \Pr\brac{\alpha\leq \frac{2}{7 d^3 R^2} }\leq 4 \exp\brac{-\frac{d^2R^2}{8}}$. 
    \end{itemize}
\end{theorem}

\section{Numerical experiments on dense $Z$-polytopes}
\label{section:numerical}

We compare our PS+Alg3 (Algorithm~\ref{alg:ASF_uniform}+Algorithm~\ref{alg:RGO:uniform:projection}) with In-and-Out (Algorithm~1 by  \cite{kook2024inandout}) on on random dense Z-polytopes. Per the explanation given in Appendix~\ref{appen_examplesconvexbodies}, a Z-polytope is specified by a generator matrix 
$V=[v_1,\dots,v_{m}]\in\mathbb{R}^{d\times m}$ through 
\[
Z \;=\;\{x=Vt:\ \|t\|_\infty\le 1\}
\;=\;\left\{\sum_{j=1}^{m} t_j v_j:\ t_j\in[-1,1]\right\}.
\]

\paragraph*{Setup.}
We draw $V\in\mathbb{R}^{d\times m}$ with i.i.d.\ standard Gaussian entries and
normalize each column to unit $\ell_2$ norm for numerical stability. We set
$d=40$ and $m=80$, and run $n=10^5$ outer iterations
per seed over $6$ independent seeds. Following Theorem~23 in
\cite{kook2024inandout}, we take $M=1$ and failure probability $p=0.1$,
which yields the conservative In-and-Out stepsize $\eta_{\mathrm{inout}}
=\brac{2d^{2} \log\!\Bigl(\frac{9  n  M}
{p}\Bigr)}^{-1}
\approx 1.95\times 10^{-5}.$

We will consider two stepsizes: 
\[
\eta_A := \eta_{\mathrm{inout}}\, \text{(Exp. A, conservative)} \quad\text{and}\quad \eta_B := 10^{-2}\, \text{(Exp. B, aggressive)}. 
\]

Note that the dominant scaling in $\eta_A$ is $1/d^2$; while the dependence on $n,M,p$ enters only through the logarithmic factor. In Experiment~A, we run both samplers with the conservative stepsize
$\eta_A$ that adheres to the recommendation of our Theorem~\ref{theo:averagerejection_projection} as well as  \cite[Theorem 23]{kook2024inandout} . In Experiment~B, we rerun both samplers at the larger practical $\eta_B$ to stress-test performance outside the conservative theory.
A baseline reference marginal is computed
once using a \emph{long hit-and-run chain}. All experiments are done in
\textsc{Matlab} with the Optimization Toolbox. The starting point of PS+Alg3 and In-and-Out is the central symmetry point of the Z-polytope. 

For Z-polytopes, $x\in\R^d$ is represented as $x=Vt$ with generator coefficients $t\in[-1,1]^{m}$. As explained in Appendix~\ref{appen_examplesconvexbodies}, both the membership check and projection for Z-polytopes are implemented via interior-point methods and have comparable arithmetic costs. Specifically, membership of $x$ is checked by LP feasibility (find $t\in[-1,1]^{m}$ with $Vt=x$, using \texttt{linprog}), and projection $\Pi_Z(y)$ is computed by the box-constrained QP $\min_{t\in[-1,1]^{m}}\|Vt-y\|_2^2$ (using \texttt{quadprog}) and returning $Vt^\star$. We report the elapsed time/wall-clock time, including all oracle-solver overhead. 

\begin{figure}[ht]
    \centering
    \begin{subfigure}[b]{0.48\linewidth}
        \centering
        \includegraphics[width=\linewidth]{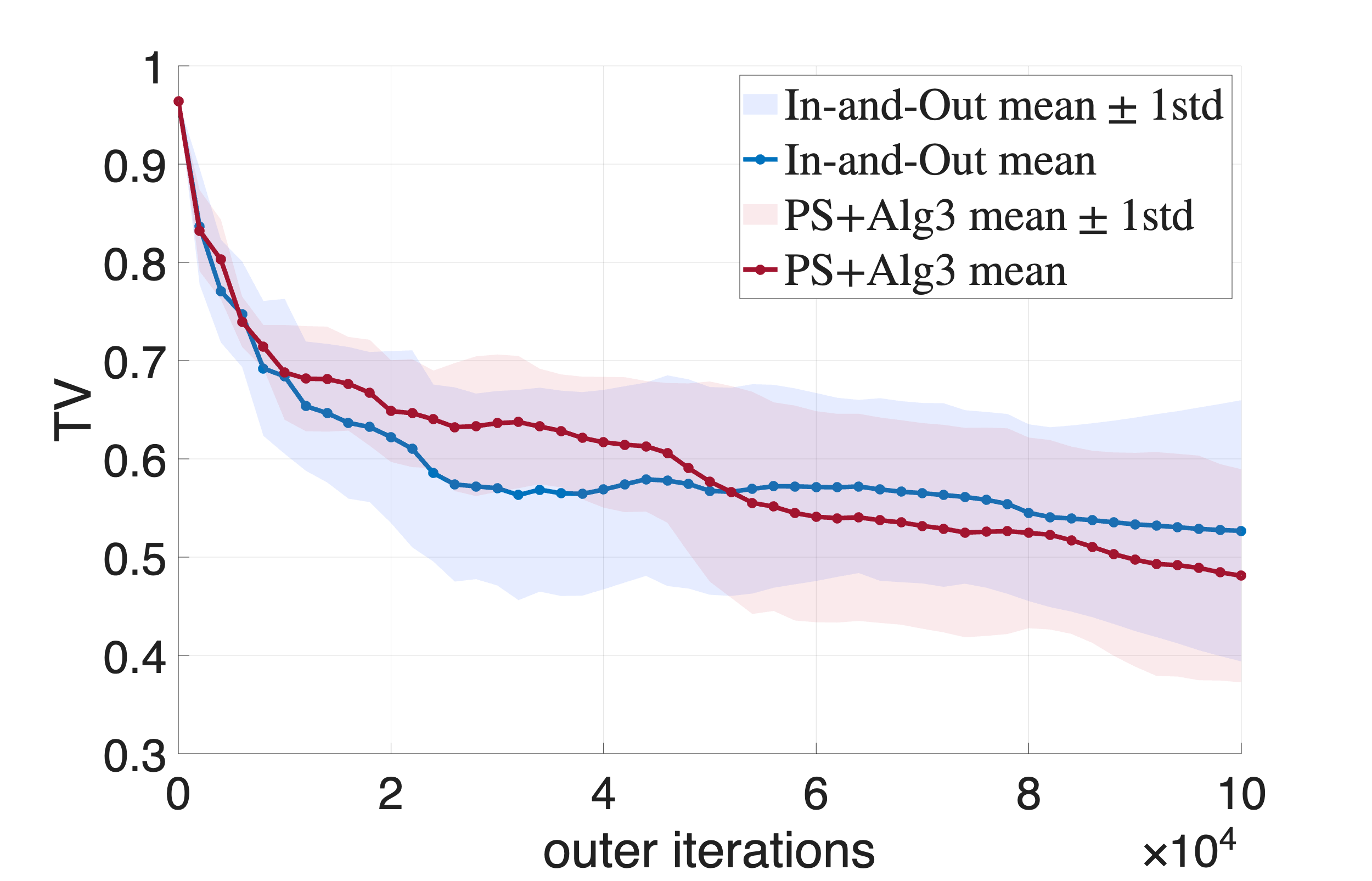}
        \caption{Exp. A ($\eta_A$): TV vs.\ iterations}
        \label{fig:zpoly-A-iter}
    \end{subfigure}
    \hfill
    \begin{subfigure}[b]{0.48\linewidth}
        \centering
        \includegraphics[width=\linewidth]{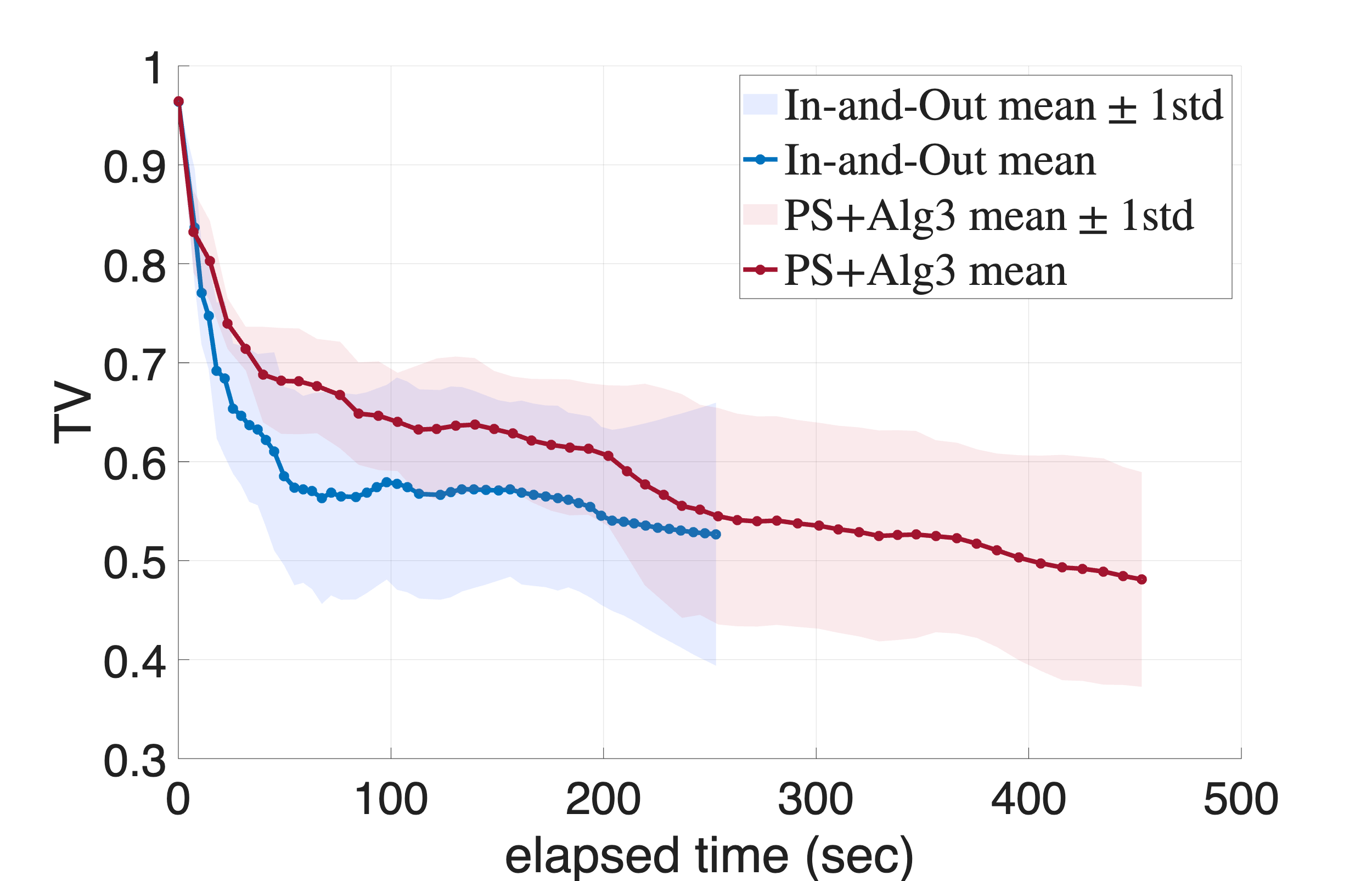}
        \caption{Exp. A ($\eta_A$): TV vs.\ elapsed time}
        \label{fig:zpoly-A-time}
    \end{subfigure}


    \begin{subfigure}[b]{0.48\linewidth}
        \centering
        \includegraphics[width=\linewidth]{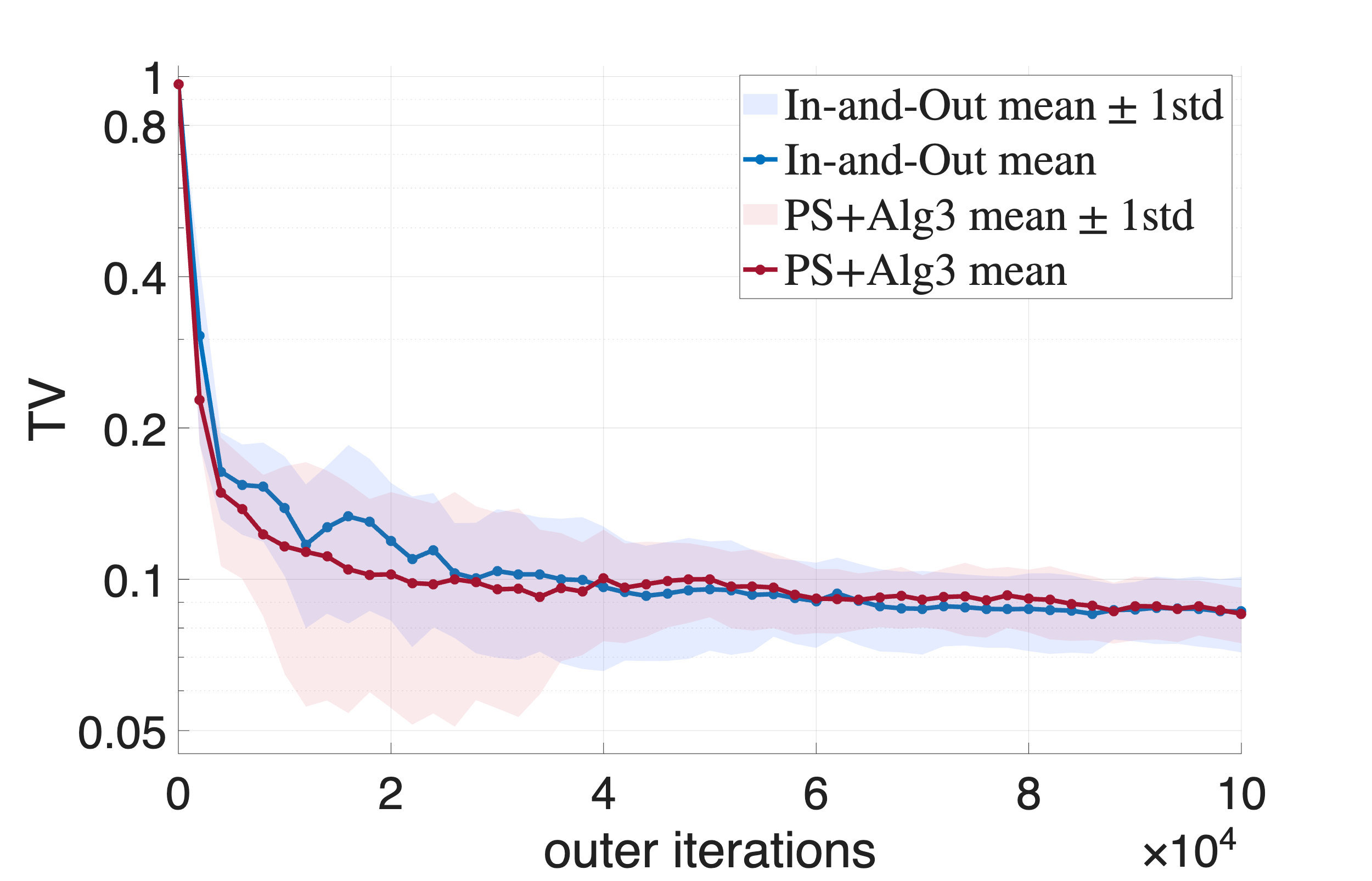}
        \caption{Exp. B ($\eta_B$): TV vs.\ iterations}
        \label{fig:zpoly-B-iter}
    \end{subfigure}
    \hfill
    \begin{subfigure}[b]{0.48\linewidth}
        \centering
        \includegraphics[width=\linewidth]{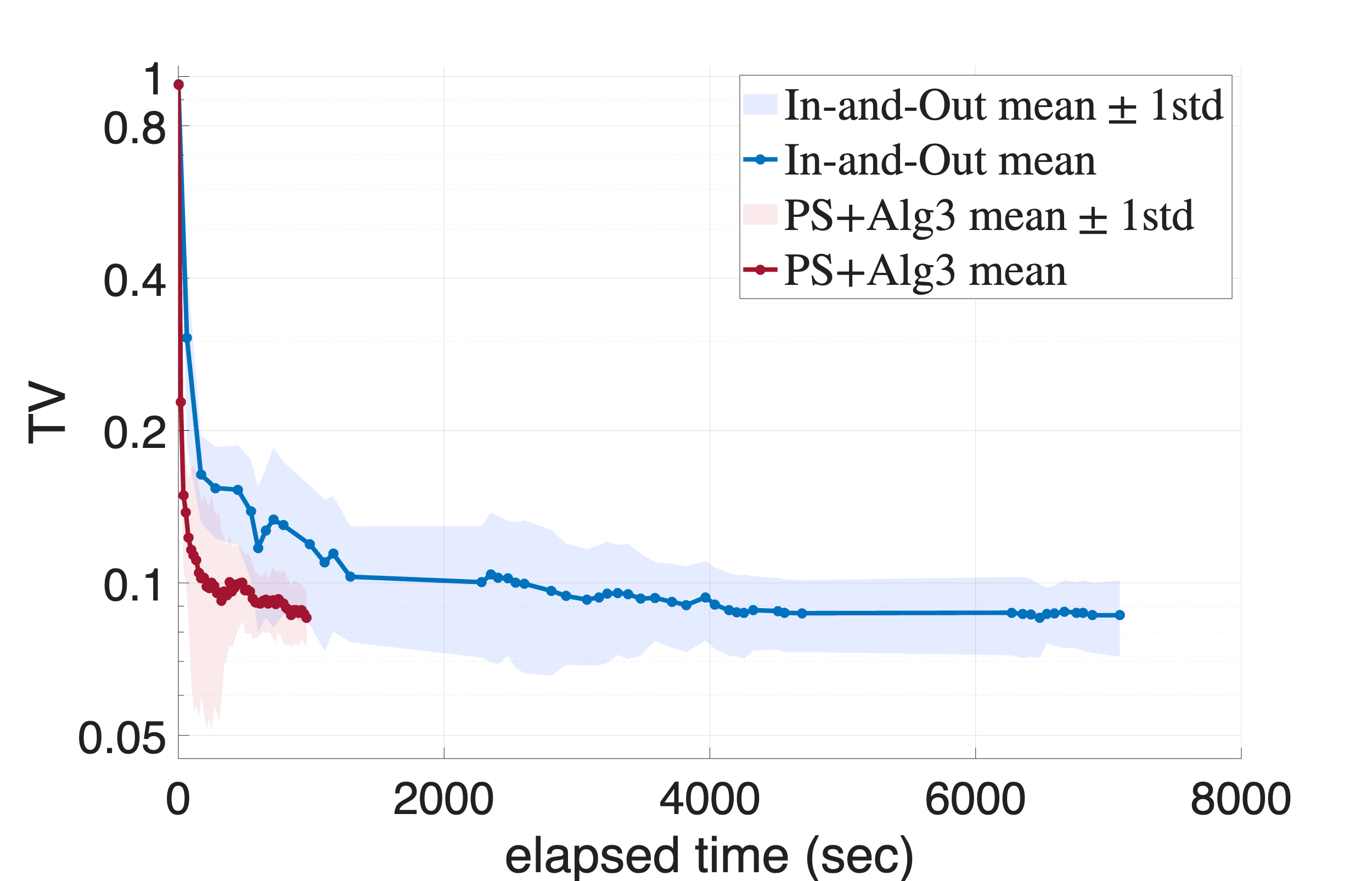}
        \caption{Exp. B ($\eta_B$): TV vs.\ elapsed time}
        \label{fig:zpoly-B-time}
    \end{subfigure}

    \caption{Comparison of In-and-Out and PS+Alg3.}
    \label{fig:zpoly-experiments-combined}
\end{figure}

\paragraph*{Results.}
Theorem~\ref{theo:outer} gives an outer-iteration bound in R\'enyi/$\chi^2$ divergence scaling as $1/\eta$. Instead, we plot the TV error of the $x_1$ marginal against a long hit-and-run reference chain, a numerically stable proxy for convergence, since estimating R\'enyi/$\chi^2$ divergence requires high-dimensional density-ratio estimates.

\emph{Experiment A (conservative stepsize $\eta_A$).}
Figure~\ref{fig:zpoly-A-iter} plots the TV error of the $x_1$ marginal versus outer iterations under the matched conservative stepsize $\eta_A$. From Figure~\ref{fig:zpoly-A-iter}, after about $5\times 10^4$ iterations, PS+Alg3 reduces TV faster and reaches a substantially lower TV level than In-and-Out. Meanwhile, Figure~\ref{fig:zpoly-A-time} shows TV versus elapsed time, where elapsed time captures both algorithmic complexity and oracle-solver overhead. The figure shows PS+Alg3 is slower in elapsed time, but it reaches a substantially lower TV level and thus has a better accuracy-time tradeoff.

\emph{Experiment B (aggressive stepsize $\eta_B$).}
Figure~\ref{fig:zpoly-B-iter} reports TV versus outer iterations of both
samplers at the larger stepsize $\eta_B=10^{-2}$.  In this instance, In-and-Out remains stable at $\eta_B$ and its TV curve
continues to decrease, reaching a final TV level comparable to PS+Alg3. PS+Alg3 still shows a modest
advantage in early iterations and attains a comparable final TV.
Figure~\ref{fig:zpoly-B-time} presents TV versus elapsed time in the same regime:
PS+Alg3 reaches a given TV substantially sooner in wall-clock time, whereas
In-and-Out requires longer elapsed time to achieve similar accuracy.


\vspace{-1.0em}
\section{Concluding remarks}
In this paper, we propose algorithms for uniform sampling from a convex body $K$ based on the proximal sampler. We explore the use of either the projection oracle on $K$ or the separation oracle on $K$, standard tools in convex optimization, for the RGO implementation (Algorithm \ref{alg:RGO:uniform:projection} and Algorithm~\ref{alg:RGO:uniform:separation}, respectively). Our RGO implementations are exact and therefore our algorithms do not have any failure probability. In both cases, the algorithms perform ${\cal O}(d^2)$ RGO steps. With a projection oracle, each RGO queries one projection and has at most ${\cal O}(1)$ expected proposals. With a separation oracle, each RGO queries ${\cal O}(d\log d)$ separations and has at most ${\cal O}(1)$ expected proposals. In addition, we perform numerical experiments to compare In-and-Out (Algorithm~1 by \cite{kook2024inandout}) with PS+Alg3 (Algorithm~\ref{alg:ASF_uniform}+Algorithm~\ref{alg:RGO:uniform:projection}) on random dense $Z$-polytopes. In the experiments of Section~\ref{section:numerical}, PS+Alg3 outperforms In-and-Out, both at the conservative stepsize suggested by the theory and at a more aggressive stepsize, even though the two methods invoke different oracles (projection versus membership).

We finally discuss some possible extensions of the paper. First, a natural question to ask beyond uniform sampling on $K$ is general log-concave sampling on $K$. Both uniform sampling on $K$ and log-concave sampling on $\R^d$ have benefited from using the proximal sampler as a generic framework in recent years; as a consequence, it is interesting to investigate algorithms based on the proximal sampler for sampling $\exp(-f(x))$ on $K$.
Second, for the purpose of uniform sampling on $K$, the RGO implementations in this paper (i.e., Algorithm \ref{alg:RGO:uniform:projection} and Algorithm \ref{alg:RGO:uniform:separation}) and 
those in \cite{kook2024inandout,kook2025coldstart,kook2025zeroth} all require a small stepsize $\eta=1/d^2$ so that RGO implementations within the proximal sampler remain efficient.
In contrast, for sampling from $\exp(-f(x))$ on $\R^d$, under the assumption that $f$ satisfies an $(L_\alpha, \alpha)$-semi-smooth condition for some $\alpha\in [0,1]$, the stepsize condition can be relaxed to $\eta= \tilde {\cal O}(d^{-\alpha/(\alpha+1)})$ in \cite{fan2023improved}. In particular, this improves the dimension dependence from ${\cal O}(d)$ to ${\cal O}(\sqrt{d})$ when $f$ is smooth (i.e., $\alpha=1$). However, techniques by \cite{fan2023improved} cannot be directly applied to uniform sampling on $K$, since the negative log-density (i.e., the indicator function $I_K(x)$) is discontinuous and hence lacks a smoothness notion. Therefore, reducing the dimension dependence for uniform sampling on $K$ still remains a challenging yet meaningful question.


\section*{Acknowledgement}
We thank Yunbum Kook for the helpful discussions.

\bibliographystyle{plain}
\bibliography{refs}

@article{kook2024inandout,
  title={In-and-out: Algorithmic diffusion for sampling convex bodies},
  author={Kook, Yunbum and Vempala, Santosh S and Zhang, Matthew S},
  journal={Advances in Neural Information Processing Systems},
  volume={37},
  pages={108354--108388},
  year={2024}
}

@article{turchin1971computation,
  title={On the computation of multidimensional integrals by the monte-carlo method},
  author={Turchin, Valentin F},
  journal={Theory of Probability \& Its Applications},
  volume={16},
  number={4},
  pages={720--724},
  year={1971},
  publisher={SIAM}
}

@article{klartag2023logarithmic,
    AUTHOR = {Klartag, Bo'az},
    TITLE = {Logarithmic bounds for isoperimetry and slices of convex sets},
    JOURNAL = {Ars Inven. Anal.},
    FJOURNAL = {Ars Inveniendi Analytica},
    YEAR = {2023},
    PAGES = {Paper No. 4, 17},
    ISSN = {2769-8505},
    MRCLASS = {52A40 (58J65)},
    MRNUMBER = {4603941},
    MRREVIEWER = {Ge\ Xiong}
}

@article{diaconis2010gibbs,
  title={Gibbs sampling, conjugate priors and coupling},
  author={Diaconis, Persi and Khare, Kshitij and Saloff-Coste, Laurent},
  journal={Sankhya A},
  volume={72},
  number={1},
  pages={136--169},
  year={2010},
  publisher={Springer}
}

@article{diaconis2012gibbs,
  title={Gibbs/Metropolis algorithms on a convex polytope},
  author={Diaconis, Persi and Lebeau, Gilles and Michel, Laurent},
  journal={Mathematische Zeitschrift},
  volume={272},
  number={1},
  pages={109--129},
  year={2012},
  publisher={Springer}
}

@article{kook2025coldstart,
  title={Faster logconcave sampling from a cold start in high dimension},
  author={Kook, Yunbum and Vempala, Santosh S},
  journal={arXiv preprint arXiv:2505.01937},
  year={2025}
}

@inproceedings{kook2025algodiffusion,
  title={Sampling and integration of logconcave functions by algorithmic diffusion},
  author={Kook, Yunbum and Vempala, Santosh S},
  booktitle={Proceedings of the 57th Annual ACM Symposium on Theory of Computing},
  pages={924--932},
  year={2025}
}

@article{kook2025zeroth,
  title={Zeroth-order log-concave sampling},
  author={Kook, Yunbum},
  journal={arXiv preprint arXiv:2507.18021},
  year={2025}
}

@article{albert1993bayesian,
  title={Bayesian analysis of binary and polychotomous response data},
  author={Albert, James H and Chib, Siddhartha},
  journal={Journal of the American statistical Association},
  volume={88},
  number={422},
  pages={669--679},
  year={1993},
  publisher={Taylor \& Francis}
}

@article{chalkis2023truncated,
  title={Truncated log-concave sampling for convex bodies with Reflective Hamiltonian Monte Carlo},
  author={Chalkis, Apostolos and Fisikopoulos, Vissarion and Papachristou, Marios and Tsigaridas, Elias},
  journal={ACM Transactions on Mathematical Software},
  volume={49},
  number={2},
  pages={1--25},
  year={2023},
  publisher={ACM New York, NY}
}

@article{smith1984efficient,
  title={Efficient Monte Carlo procedures for generating random feasible solutions over bounded regions},
  author={Smith, RL},
  journal={Operations Research},
  volume={32},
  pages={1296--1308},
  year={1984}
}

@article{lovasz1993random,
  title={Random walks in a convex body and an improved volume algorithm},
  author={Lov{\'a}sz, L{\'a}szl{\'o} and Simonovits, Mikl{\'o}s},
  journal={Random structures \& algorithms},
  volume={4},
  number={4},
  pages={359--412},
  year={1993},
  publisher={Wiley Online Library}
}

@inproceedings{lovasz2006fast,
  title={Fast algorithms for logconcave functions: Sampling, rounding, integration and optimization},
  author={Lov{\'a}sz, L{\'a}szl{\'o} and Vempala, Santosh},
  booktitle={2006 47th Annual IEEE Symposium on Foundations of Computer Science (FOCS'06)},
  pages={57--68},
  year={2006},
  organization={IEEE}
}

@article{chib1992bayes,
  title={Bayes inference in the Tobit censored regression model},
  author={Chib, Siddhartha},
  journal={Journal of Econometrics},
  volume={51},
  number={1-2},
  pages={79--99},
  year={1992},
  publisher={Elsevier}
}

@article{tobin1958estimation,
  title={Estimation of relationships for limited dependent variables},
  author={Tobin, James},
  journal={Econometrica: journal of the Econometric Society},
  pages={24--36},
  year={1958},
  publisher={JSTOR}
}

@article{dyer1991random,
  title={A random polynomial-time algorithm for approximating the volume of convex bodies},
  author={Dyer, Martin and Frieze, Alan and Kannan, Ravi},
  journal={Journal of the ACM (JACM)},
  volume={38},
  number={1},
  pages={1--17},
  year={1991},
  publisher={ACM New York, NY, USA}
}

@article{
liang2023a,
title={A Proximal Algorithm for Sampling},
author={Jiaming Liang and Yongxin Chen},
journal={Transactions on Machine Learning Research},
issn={2835-8856},
year={2023},
note={}
}

@article{liang2024proximal,
  title={Proximal oracles for optimization and sampling},
  author={Liang, Jiaming and Chen, Yongxin},
  journal={arXiv preprint arXiv:2404.02239},
  year={2024}
}

@article{imai2005bayesian,
  title={A Bayesian analysis of the multinomial probit model using marginal data augmentation},
  author={Imai, Kosuke and Van Dyk, David A},
  journal={Journal of econometrics},
  volume={124},
  number={2},
  pages={311--334},
  year={2005},
  publisher={Elsevier}
}

@inproceedings{vempala2010recent,
  title={Recent progress and open problems in algorithmic convex geometry},
  author={Vempala, Santosh S},
  booktitle={IARCS Annual Conference on Foundations of Software Technology and Theoretical Computer Science (FSTTCS 2010)},
  pages={42--64},
  year={2010},
  organization={Schloss Dagstuhl--Leibniz-Zentrum f{\"u}r Informatik}
}

@inproceedings{fan2023improved,
  title={Improved dimension dependence of a proximal algorithm for sampling},
  author={Fan, Jiaojiao and Yuan, Bo and Chen, Yongxin},
  booktitle={The Thirty Sixth Annual Conference on Learning Theory},
  pages={1473--1521},
  year={2023},
  organization={PMLR}
}

@inproceedings{liangchen2022proximal,
  title={A proximal algorithm for sampling from non-smooth potentials},
  author={Liang, Jiaming and Chen, Yongxin},
  booktitle={2022 Winter Simulation Conference (WSC)},
  pages={3229--3240},
  year={2022},
  organization={IEEE}
}

@article{cousins2017efficient,
  title={Efficient high-dimensional sampling and integration},
  author={Cousins, Benjamin},
  journal={PhD thesis},
  year={2017},
  publisher={Georgia Institute of Technology}
}

@article{Held2006,
  title = {Bayesian auxiliary variable models for binary and multinomial regression},
  volume = {1},
  ISSN = {1936-0975},
  number = {1},
  journal = {Bayesian Analysis},
  publisher = {Institute of Mathematical Statistics},
  author = {Held,  Leonhard and Holmes,  Chris C.},
  year = {2006},
  month = mar 
}

@InProceedings{leestructured21,
  title = 	 {Structured Logconcave Sampling with a Restricted Gaussian Oracle},
  author =       {Lee, Yin Tat and Shen, Ruoqi and Tian, Kevin},
  booktitle = 	 {Proceedings of Thirty Fourth Conference on Learning Theory},
  pages = 	 {2993--3050},
  year = 	 {2021},
  editor = 	 {Belkin, Mikhail and Kpotufe, Samory},
  volume = 	 {134},
  series = 	 {Proceedings of Machine Learning Research},
  month = 	 {15--19 Aug},
  publisher =    {PMLR},
  abstract = 	 {We give algorithms for sampling several structured logconcave families to high accuracy. We further develop a reduction framework, inspired by proximal point methods in convex optimization, which bootstraps samplers for regularized densities to generically improve dependences on problem conditioning $\kappa$ from polynomial to linear. A key ingredient in our framework is the notion of a "restricted Gaussian oracle" (RGO) for $g: \mathbb{R}^d \rightarrow \mathbb{R}$, which is a sampler for distributions whose negative log-likelihood sums a quadratic (in a multiple of the identity) and $g$. By combining our reduction framework with our new samplers, we obtain the following bounds for sampling structured distributions to total variation distance $\eps$.
For composite densities $\exp(-f(x) - g(x))$, where $f$ has condition number $\kappa$ and convex (but possibly non-smooth) $g$ admits an RGO, we obtain a mixing time of $O(\kappa d \log^3\tfrac{\kappa d}{\epsilon})$, matching the state-of-the-art non-composite bound Lee et.\ al.\. No composite samplers with better mixing than general-purpose logconcave samplers were previously known.
For logconcave finite sums $\exp(-F(x))$, where $F(x) = \tfrac{1}{n}\sum_{i \in [n]} f_i(x)$ has condition number $\kappa$, we give a sampler querying $\widetilde{O}(n + \kappa\max(d, \sqrt{nd}))$ gradient oracles\footnote{For convenience of exposition, the $\widetilde{O}$ notation hides logarithmic factors in the dimension $d$, problem conditioning $\kappa$, desired accuracy $\epsilon$, and summand count $n$ (when applicable). A first-order (gradient) oracle for $f:\mathbb{R}^d \rightarrow \mathbb{R}$ returns $(f(x), \nabla f(x))$ on input $x$, and a zeroth-order (value) oracle only returns $f(x)$.} to $\{f_i\}_{i \in [n]}$. No high-accuracy samplers with nontrivial gradient query complexity were previously known.
For densities with condition number $\kappa$, we give an algorithm obtaining mixing time $O(\kappa d \log^2\tfrac{\kappa d}{\epsilon})$, improving Lee et.\ al.\ by a logarithmic factor with a significantly simpler analysis. We also show a zeroth-order algorithm attains the same query complexity.}
}

@InProceedings{chen2022improved,
  title = 	 {Improved analysis for a proximal algorithm for sampling},
  author =       {Chen, Yongxin and Chewi, Sinho and Salim, Adil and Wibisono, Andre},
  booktitle = 	 {Proceedings of Thirty Fifth Conference on Learning Theory},
  pages = 	 {2984--3014},
  year = 	 {2022},
  editor = 	 {Loh, Po-Ling and Raginsky, Maxim},
  volume = 	 {178},
  series = 	 {Proceedings of Machine Learning Research},
  month = 	 {02--05 Jul},
  publisher =    {PMLR},
  abstract = 	 {We study the proximal sampler of Lee, Shen, and Tian (2021) and obtain new convergence guarantees under weaker assumptions than strong log-concavity: namely, our results hold for (1) weakly log-concave targets, and (2) targets satisfying isoperimetric assumptions which allow for non-log-concavity. We demonstrate our results by obtaining new state-of-the-art sampling guarantees for several classes of target distributions. We also strengthen the connection between the proximal sampler and the proximal method in optimization by interpreting the former as an entropically regularized Wasserstein gradient flow and the latter as the limit of one.}
}

@book{Boyd2004,
  title = {Convex Optimization},
  ISBN = {9780511804441},
  publisher = {Cambridge University Press},
  author = {Boyd,  Stephen and Vandenberghe,  Lieven},
  year = {2004},
  month = mar 
}

@inproceedings{jiang2020cuttingplane,
  title={An improved cutting plane method for convex optimization, convex-concave games, and its applications},
  author={Jiang, Haotian and Lee, Yin Tat and Song, Zhao and Wong, Sam Chiu-wai},
  booktitle={Proceedings of the 52nd Annual ACM SIGACT Symposium on Theory of Computing},
  pages={944--953},
  year={2020}
}

@article{gilks1992adaptive,
  title={Adaptive rejection sampling for Gibbs sampling},
  author={Gilks, Walter R and Wild, Pascal},
  journal={Journal of the Royal Statistical Society: Series C (Applied Statistics)},
  volume={41},
  number={2},
  pages={337--348},
  year={1992},
  publisher={Wiley Online Library}
}

@book{ledoux2013probability,
  title={Probability in Banach Spaces: isoperimetry and processes},
  author={Ledoux, Michel and Talagrand, Michel},
  year={2013},
  publisher={Springer Science \& Business Media}
}

@article{muller1959note,
  title={A note on a method for generating points uniformly on n-dimensional spheres},
  author={Muller, Mervin E},
  journal={Communications of the ACM},
  volume={2},
  number={4},
  pages={19--20},
  year={1959},
  publisher={ACM New York, NY, USA}
}

@inproceedings{lovasz1990mixing,
  title={The mixing rate of Markov chains, an isoperimetric inequality, and computing the volume},
  author={Lov{\'a}sz, L{\'a}szl{\'o} and Simonovits, Mikl{\'o}s},
  booktitle={Proceedings [1990] 31st annual symposium on foundations of computer science},
  pages={346--354},
  year={1990},
  organization={IEEE}
}

@article{cousins2018gaussian,
  title={Gaussian Cooling and O\^{}*(n\^{}3) Algorithms for Volume and Gaussian Volume},
  author={Cousins, Ben and Vempala, Santosh},
  journal={SIAM Journal on Computing},
  volume={47},
  number={3},
  pages={1237--1273},
  year={2018},
  publisher={SIAM}
}

@article{kannan1997random,
  title={Random walks and an o*(n5) volume algorithm for convex bodies},
  author={Kannan, Ravi and Lov{\'a}sz, L{\'a}szl{\'o} and Simonovits, Mikl{\'o}s},
  journal={Random Structures \& Algorithms},
  volume={11},
  number={1},
  pages={1--50},
  year={1997},
  publisher={Wiley Online Library}
}

@article{bubeck2018sampling,
  title={Sampling from a log-concave distribution with projected Langevin Monte Carlo},
  author={Bubeck, S{\'e}bastien and Eldan, Ronen and Lehec, Joseph},
  journal={Discrete \& Computational Geometry},
  volume={59},
  number={4},
  pages={757--783},
  year={2018},
  publisher={Springer}
}

@inproceedings{brosse2017sampling,
  title={Sampling from a log-concave distribution with compact support with proximal Langevin Monte Carlo},
  author={Brosse, Nicolas and Durmus, Alain and Moulines, {\'E}ric and Pereyra, Marcelo},
  booktitle={Conference on learning theory},
  pages={319--342},
  year={2017},
  organization={PMLR}
}

@incollection{cheeger2015lower,
  title={A lower bound for the smallest eigenvalue of the Laplacian},
  author={Cheeger, Jeff},
  booktitle={Problems in analysis},
  pages={195--200},
  year={2015},
  publisher={Princeton University Press}
}

@inproceedings{kannan2009random,
  title={Random walks on polytopes and an affine interior point method for linear programming},
  author={Kannan, Ravi and Narayanan, Hariharan},
  booktitle={Proceedings of the forty-first annual ACM symposium on Theory of computing},
  pages={561--570},
  year={2009}
}

@inproceedings{lee2017geodesic,
  title={Geodesic walks in polytopes},
  author={Lee, Yin Tat and Vempala, Santosh S},
  booktitle={Proceedings of the 49th Annual ACM SIGACT Symposium on theory of Computing},
  pages={927--940},
  year={2017}
}

@inproceedings{kook2024gaussian,
  title={Gaussian cooling and Dikin walks: The interior-point method for logconcave sampling},
  author={Kook, Yunbum and Vempala, Santosh S},
  booktitle={The Thirty Seventh Annual Conference on Learning Theory},
  pages={3137--3240},
  year={2024},
  organization={PMLR}
}

@article{girolami2011riemann,
  title={Riemann manifold langevin and hamiltonian monte carlo methods},
  author={Girolami, Mark and Calderhead, Ben},
  journal={Journal of the Royal Statistical Society Series B: Statistical Methodology},
  volume={73},
  number={2},
  pages={123--214},
  year={2011},
  publisher={Oxford University Press}
}

@inproceedings{lee2018convergence,
  title={Convergence rate of Riemannian Hamiltonian Monte Carlo and faster polytope volume computation},
  author={Lee, Yin Tat and Vempala, Santosh S},
  booktitle={Proceedings of the 50th Annual ACM SIGACT Symposium on Theory of Computing},
  pages={1115--1121},
  year={2018}
}

@inproceedings{kook2023condition,
  title={Condition-number-independent convergence rate of Riemannian Hamiltonian Monte Carlo with numerical integrators},
  author={Kook, Yunbum and Lee, Yin Tat and Shen, Ruoqi and Vempala, Santosh},
  booktitle={The Thirty Sixth Annual Conference on Learning Theory},
  pages={4504--4569},
  year={2023},
  organization={PMLR}
}

@article{gatmiry2023sampling,
  title={Sampling with barriers: Faster mixing via Lewis weights},
  author={Gatmiry, Khashayar and Kelner, Jonathan and Vempala, Santosh S},
  journal={arXiv preprint arXiv:2303.00480},
  year={2023}
}

@inproceedings{zhang2020wasserstein,
  title={Wasserstein control of mirror langevin monte carlo},
  author={Zhang, Kelvin Shuangjian and Peyr{\'e}, Gabriel and Fadili, Jalal and Pereyra, Marcelo},
  booktitle={Conference on learning theory},
  pages={3814--3841},
  year={2020},
  organization={PMLR}
}

@article{jiang2021mirror,
  title={Mirror Langevin Monte Carlo: the case under isoperimetry},
  author={Jiang, Qijia},
  journal={Advances in Neural Information Processing Systems},
  volume={34},
  pages={715--725},
  year={2021}
}

@article{ahn2021efficient,
  title={Efficient constrained sampling via the mirror-Langevin algorithm},
  author={Ahn, Kwangjun and Chewi, Sinho},
  journal={Advances in Neural Information Processing Systems},
  volume={34},
  pages={28405--28418},
  year={2021}
}

@inproceedings{li2022mirror,
  title={The mirror Langevin algorithm converges with vanishing bias},
  author={Li, Ruilin and Tao, Molei and Vempala, Santosh S and Wibisono, Andre},
  booktitle={International Conference on Algorithmic Learning Theory},
  pages={718--742},
  year={2022},
  organization={PMLR}
}

@inproceedings{srinivasan2024fast,
  title={Fast sampling from constrained spaces using the Metropolis-adjusted Mirror Langevin algorithm},
  author={Srinivasan, Vishwak and Wibisono, Andre and Wilson, Ashia},
  booktitle={The Thirty Seventh Annual Conference on Learning Theory},
  pages={4593--4635},
  year={2024},
  organization={PMLR}
}

@article{lehec2023langevin,
  title={The Langevin Monte Carlo algorithm in the non-smooth log-concave case},
  author={Lehec, Joseph},
  journal={The Annals of Applied Probability},
  volume={33},
  number={6A},
  pages={4858--4874},
  year={2023},
  publisher={Institute of Mathematical Statistics}
}

@article{gurbuzbalaban2024penalized,
author = {G\"{u}rb\"{u}zbalaban, Mert and Hu, Yuanhan and Zhu, Lingjiong},
title = {Penalized overdamped and underdamped Langevin Monte Carlo algorithms for constrained sampling},
year = {2024},
issue_date = {January 2024},
publisher = {JMLR.org},
volume = {25},
number = {1},
issn = {1532-4435},
abstract = {We consider the constrained sampling problem where the goal is to sample from a target distribution π(x) ∝ e-f(x) when x is constrained to lie on a convex body C ⊂ ℝd. Motivated by penalty methods from continuous optimization, we propose and study penalized Langevin Dynamics (PLD) and penalized underdamped Langevin Monte Carlo (PULMC) methods for constrained sampling that convert the constrained sampling problem into an unconstrained sampling problem by introducing a penalty function for constraint violations. When f is smooth and gradients of f are available, we show \~{O} (d/ε-10) iteration complexity for PLD to sample the target up to an ε-error where the error is measured in terms of the total variation distance and \~{O}(·) hides some logarithmic factors. For PULMC, we improve this result to \~{O} (√d/ε7) when the Hessian of f is Lipschitz and the boundary of C is sufficiently smooth. To our knowledge, these are the first convergence rate results for underdamped Langevin Monte Carlo methods in the constrained sampling setting that can handle non-convex choices of f and can provide guarantees with the best dimension dependency among existing methods for constrained sampling when the gradients are deterministically available. We then consider the setting where only unbiased stochastic estimates of the gradients of f are available, motivated by applications to largescale Bayesian learning problems. We propose PSGLD and PSGULMC methods that are variants of PLD and PULMC that can handle stochastic gradients and that are scaleable to large datasets without requiring Metropolis-Hasting correction steps. For PSGLD and PSGULMC, when f is strongly convex and smooth, we obtain an iteration complexity of \~{O} (d/ε18) and \~{O} (d√d/ε39) respectively in the 2-Wasserstein distance. For the more general case, when f is smooth and f can be non-convex, we also provide finite-time performance bounds and iteration complexity results. Finally, we illustrate the performance of our algorithms on Bayesian LASSO regression and Bayesian constrained deep learning problems.},
journal = {J. Mach. Learn. Res.},
month = jan,
articleno = {263},
numpages = {67},
keywords = {constrained sampling, Bayesian learning, Langevin Monte Carlo, penalty methods, stochastic gradient algorithms}
}

@inproceedings{gopi2023algorithmic,
  title={Algorithmic aspects of the log-Laplace transform and a non-Euclidean proximal sampler},
  author={Gopi, Sivakanth and Lee, Yin Tat and Liu, Daogao and Shen, Ruoqi and Tian, Kevin},
  booktitle={The Thirty Sixth Annual Conference on Learning Theory},
  pages={2399--2439},
  year={2023},
  organization={PMLR}
}

@article{kannan1995isoperimetric,
  title={Isoperimetric problems for convex bodies and a localization lemma},
  author={Kannan, Ravi and Lov{\'a}sz, L{\'a}szl{\'o} and Simonovits, Mikl{\'o}s},
  journal={Discrete \& Computational Geometry},
  volume={13},
  number={3},
  pages={541--559},
  year={1995},
  publisher={Springer}
}

@article{lee2024eldan,
  title={Eldan's stochastic localization and the KLS conjecture: Isoperimetry, concentration and mixing},
  author={Lee, Yin Tat and Vempala, Santosh S},
  journal={Annals of Mathematics},
  volume={199},
  number={3},
  pages={1043--1092},
  year={2024},
  publisher={Department of Mathematics, Princeton University Princeton, New Jersey, USA}
}

@article{chen2021almost,
  title={An almost constant lower bound of the isoperimetric coefficient in the KLS conjecture},
  author={Chen, Yuansi},
  journal={Geometric and Functional Analysis},
  volume={31},
  number={1},
  pages={34--61},
  year={2021},
  publisher={Springer}
}

@article{klartag2022bourgain,
  title={Bourgain’s slicing problem and KLS isoperimetry up to polylog},
  author={Klartag, Bo’az and Lehec, Joseph},
  journal={Geometric and functional analysis},
  volume={32},
  number={5},
  pages={1134--1159},
  year={2022},
  publisher={Springer}
}

@InProceedings{gopi22a,
  title = 	 {Private Convex Optimization via Exponential Mechanism},
  author =       {Gopi, Sivakanth and Lee, Yin Tat and Liu, Daogao},
  booktitle = 	 {Proceedings of Thirty Fifth Conference on Learning Theory},
  pages = 	 {1948--1989},
  year = 	 {2022},
  editor = 	 {Loh, Po-Ling and Raginsky, Maxim},
  volume = 	 {178},
  series = 	 {Proceedings of Machine Learning Research},
  month = 	 {02--05 Jul},
  publisher =    {PMLR},
  abstract = 	 {In this paper, we study the private optimization problems for non-smooth convex functions $F(x)=\mathbb{E}_i f_i(x)$ on $\mathbb{R}^d$. We show that modifying the exponential mechanism by adding an $\ell_2^2$ regularizer to $F(x)$ and sampling from $\pi(x)\propto \exp(-k(F(x)+\mu\|x\|_2^2/2))$ recovers both the known optimal empirical risk and population loss under $(\eps,\delta)$-DP. Furthermore, we show how to implement this mechanism using $\widetilde{O}(n \min(d, n))$ queries to $f_i(x)$ where $n$ is the number of samples/users in the DP-SCO. We also give a (nearly) matching lower bound $\widetilde{\Omega}(n \min(d, n))$ on the number of evaluation queries. Our results utilize the following tools that are of independent interests: \begin{itemize} \item We prove Gaussian Differential Privacy (GDP) of the exponential mechanism if the loss function is strongly convex and the perturbation is Lipschitz. Our privacy bound is \emph{optimal} as it includes the privacy of Gaussian mechanism as a special case. \item We show how to sample from $\exp(-F(x)-\mu \|x\|^2_2/2)$ for $G$-Lipschitz $F$ with $\eta$ error in TV distance using $\widetilde{O}((G^2/\mu) \log^2(d/\eta))$ unbiased queries to $F(x)$. This is the first sampler whose query complexity has \emph{polylogarithmic dependence} on both dimension $d$ and accuracy $\eta$. \end{itemize}}
}

@inproceedings{chen2025rapid,
  title={Rapid mixing at the uniqueness threshold},
  author={Chen, Xiaoyu and Chen, Zongchen and Yin, Yitong and Zhang, Xinyuan},
  booktitle={Proceedings of the 57th Annual ACM Symposium on Theory of Computing},
  pages={879--890},
  year={2025}
}

@InProceedings{pmlr-v272-mitra25a,
  title = 	 {Fast Convergence of $\phi$-Divergence Along the Unadjusted Langevin Algorithm and Proximal Sampler},
  author =       {Mitra, Siddharth and Wibisono, Andre},
  booktitle = 	 {Proceedings of The 36th International Conference on Algorithmic Learning Theory},
  pages = 	 {846--869},
  year = 	 {2025},
  editor = 	 {Kamath, Gautam and Loh, Po-Ling},
  volume = 	 {272},
  series = 	 {Proceedings of Machine Learning Research},
  month = 	 {24--27 Feb},
  publisher =    {PMLR},
  abstract = 	 {We study the mixing time of two popular discrete-time Markov chains in continuous space, the Unadjusted Langevin Algorithm and the Proximal Sampler, which are discretizations of the Langevin dynamics. We extend mixing time analyses for these Markov chains to hold in $\Phi$-divergence. We show that any $\Phi$-divergence arising from a twice-differentiable strictly convex function $\Phi$ converges to $0$ exponentially fast along these Markov chains, under the assumption that their stationary distributions satisfy the corresponding $\Phi$-Sobolev inequality, which holds for example when the target distribution of the Langevin dynamics is strongly log-concave. Our setting includes as special cases popular mixing time regimes, namely the mixing in chi-squared divergence under a Poincaré inequality, and the mixing in relative entropy under a log-Sobolev inequality. Our results follow by viewing the sampling algorithms as noisy channels and bounding the contraction coefficients arising in the appropriate strong data processing inequalities.}
}

@article{wibisono2025mixing,
  title={Mixing time of the proximal sampler in relative Fisher information via strong data processing inequality},
  author={Wibisono, Andre},
  journal={arXiv preprint arXiv:2502.05623},
  year={2025}
}

@inproceedings{yuan2023class,
  title={On a class of Gibbs sampling over networks},
  author={Yuan, Bo and Fan, Jiaojiao and Liang, Jiaming and Wibisono, Andre and Chen, Yongxin},
  booktitle={The Thirty Sixth Annual Conference on Learning Theory},
  pages={5754--5780},
  year={2023},
  organization={PMLR}
}

@inproceedings{yuan2025proximal,
  title={Proximal Sampler with Adaptive Step Size},
  author={Yuan, Bo and Fan, Jiaojiao and Liang, Jiaming and Chen, Yongxin},
  booktitle={The 28th International Conference on Artificial Intelligence and Statistics},
  year={2025}
}

@article{vempala2005survey,
  title={Geometric random walks: a survey},
  author={Vempala, Santosh},
  journal={Combinatorial and computational geometry},
  volume={52},
  number={573-612},
  pages={2},
  year={2005},
  publisher={Cambridge University Press Cambridge}
}

@article{lovasz1999hit,
  title={Hit-and-run mixes fast},
  author={Lov{\'a}sz, L{\'a}szl{\'o}},
  journal={Mathematical programming},
  volume={86},
  number={3},
  pages={443--461},
  year={1999},
  publisher={Springer}
}

@article{cousins2016practical,
  title={A practical volume algorithm},
  author={Cousins, Ben and Vempala, Santosh},
  journal={Mathematical Programming Computation},
  volume={8},
  number={2},
  pages={133--160},
  year={2016},
  publisher={Springer}
}

@inproceedings{emiris2014efficient,
  title={Efficient random-walk methods for approximating polytope volume},
  author={Emiris, Ioannis Z and Fisikopoulos, Vissarion},
  booktitle={Proceedings of the thirtieth annual symposium on Computational geometry},
  pages={318--327},
  year={2014}
}

@article{narayanan2022mixing,
  title={On the mixing time of coordinate hit-and-run},
  author={Narayanan, Hariharan and Srivastava, Piyush},
  journal={Combinatorics, Probability and Computing},
  volume={31},
  number={2},
  pages={320--332},
  year={2022},
  publisher={Cambridge University Press}
}

@article{laddha2023convergence,
  title={Convergence of Gibbs sampling: Coordinate Hit-and-Run mixes fast},
  author={Laddha, Aditi and Vempala, Santosh S},
  journal={Discrete \& Computational Geometry},
  volume={70},
  number={2},
  pages={406--425},
  year={2023},
  publisher={Springer}
}

@inbook{Chalkis2020,
  title = {Practical Volume Estimation of Zonotopes by a New Annealing Schedule for Cooling Convex Bodies},
  ISBN = {9783030522001},
  ISSN = {1611-3349},
  booktitle = {Mathematical Software – ICMS 2020},
  publisher = {Springer International Publishing},
  author = {Chalkis,  Apostolos and Emiris,  Ioannis Z. and Fisikopoulos,  Vissarion},
  year = {2020},
  pages = {212–221}
}

@article{Condat2015,
  title = {Fast projection onto the simplex and the $\ell1$ ball},
  volume = {158},
  ISSN = {1436-4646},
  number = {1–2},
  journal = {Mathematical Programming},
  publisher = {Springer Science and Business Media LLC},
  author = {Condat,  Laurent},
  year = {2015},
  month = sep,
  pages = {575–585}
}

@book{Devroye1986,
  title = {Non-Uniform Random Variate Generation},
  ISBN = {9781461386438},
  publisher = {Springer New York},
  author = {Devroye,  Luc},
  year = {1986}
}

@inproceedings{Duchi2008,
  series = {ICML ’08},
  title = {Efficient projections onto thel1-ball for learning in high dimensions},
  booktitle = {Proceedings of the 25th international conference on Machine learning - ICML ’08},
  publisher = {ACM Press},
  author = {Duchi,  John and Shalev-Shwartz,  Shai and Singer,  Yoram and Chandra,  Tushar},
  year = {2008},
  pages = {272–279},
  collection = {ICML ’08}
}

@book{Nesterovnemir1994,
  title = {Interior-Point Polynomial Algorithms in Convex Programming},
  ISBN = {9781611970791},
  publisher = {Society for Industrial and Applied Mathematics},
  author = {Nesterov,  Yurii and Nemirovskii,  Arkadii},
  year = {1994},
  month = jan 
}

@book{Schneider2013,
  title = {Convex Bodies The Brunn-MinkowskiTheory},
  ISBN = {9781139003858},
  publisher = {Cambridge University Press},
  author = {Schneider,  Rolf},
  year = {2013},
  month = oct 
}

@book{trefethen2022numerical,
  title={Numerical linear algebra},
  author={Trefethen, Lloyd N and Bau, David},
  year={2022},
  publisher={SIAM}
}

@article{Bubeck2018,
  title = {Sampling from a Log-Concave Distribution with Projected Langevin Monte Carlo},
  volume = {59},
  ISSN = {1432-0444},
  number = {4},
  journal = {Discrete and Computational Geometry},
  publisher = {Springer Science and Business Media LLC},
  author = {Bubeck,  Sébastien and Eldan,  Ronen and Lehec,  Joseph},
  year = {2018},
  month = apr,
  pages = {757–783}
}

@article{Lehec2023,
  title = {The Langevin Monte Carlo algorithm in the non-smooth log-concave case},
  volume = {33},
  ISSN = {1050-5164},
  number = {6A},
  journal = {The Annals of Applied Probability},
  publisher = {Institute of Mathematical Statistics},
  author = {Lehec,  Joseph},
  year = {2023},
  month = dec 
}

\newpage
\appendix
\begin{center}
\Large \bf Appendices to Oracle-based Uniform Sampling from Convex Bodies
\end{center}

The Appendices are organized as follows.

\begin{itemize}
\item Appendix \ref{appendix:uniformsampling} provides a summary of some algorithms for uniform sampling from convex bodies.

\item Appendix \ref{appendix:techlem} collects some technical lemmas and short proofs deferred from Sections~\ref{sec:prelim} and~\ref{sec:projection_uniform}. 
 
\item Appendix \ref{appendix:secseparation} contains results that are relevant to Algorithm \ref{alg:RGO:uniform:separation}, for example, the Cutting Plane method by \cite{jiang2020cuttingplane}.

\item Appendix~\ref{appen_examplesconvexbodies} presents examples of common convex bodies and their oracle implementations. 
\end{itemize}

\section{Algorithms for uniform sampling on convex bodies}
\label{appendix:uniformsampling}

Continuing the discussion in Section \ref{sec:intro}, we mention here several algorithms for uniform sampling from convex $K$. We refer to the survey \cite{vempala2005survey,vempala2010recent} and the dissertation \cite{cousins2017efficient} for additional details.

\begin{itemize}

\item Assuming a membership oracle, the Ball walk introduced by \cite{dyer1991random} works as follows: pick a uniform random point $y$ from the ball of radius $\delta$ centered at the
current point $x$; if $y$ is in $K$, go to $y$, otherwise stay at the current point $x$. Per \cite{lovasz1993random,vempala2005survey}, assuming the starting distribution is $M$-warm, the number of steps of the Ball walk to reach $\epsilon$-accuracy in the total variation distance is of the order $\mathcal{O}\brac{d^2 C_{\mathrm{LSI}}\frac{M^2}{\epsilon^2}\log\frac{M}{\epsilon} }$.

    \item The Hit-and-Run walk is first introduced in \cite{smith1984efficient} and rigorously investigated by Lov\'{a}sz and Simonovits in \cite{lovasz1993random}. Also assuming a membership oracle on $K$ and in the special case of uniform sampling, the Hit-and-Run walk is: choose a uniform direction over the unit sphere and find a line segment in that direction that intersects $K$ at two endpoints but still belong to $K$; then go to a uniform random point on that line segment. \cite{lovasz1999hit} shows the its iteration complexity in total variance is of the order $\mathcal{O}\brac{d^2C_{\mathrm{LSI}}\frac{M^2}{\epsilon^2}}$.
\item The coordinate Hit-and-Run walk \cite{turchin1971computation,diaconis2010gibbs,diaconis2012gibbs} is similar to the Hit-and-Run walk, with the difference being it picks a coordinate axis uniformly instead of considering all directions in a unit sphere. Although there have been experimental results in \cite{cousins2016practical,emiris2014efficient} which show the coordinate Hit-and-Run walk to mix faster than the original version in certain settings, the state-of-the-art upper bounds on the iteration complexity of the coordinate Hit-and-Run walk in \cite{laddha2023convergence,narayanan2022mixing} are worse than that of the original Hit-and-Run walk.

\item The In-and-Out algorithm by \cite{kook2024inandout} is the proximal sampler by \cite{leestructured21}, where the RGO implementation is via a membership oracle: sample $x$ from a Gaussian distribution until $x\in K$ up to a certain number of maximum attempts $N$, at which point the algorithm halts and declares failure. Their iteration complexities in R\'{e}nyi divergence and $\chi^2$-divergence are the same as those in our Theorem~\ref{theo:outer} in Section \ref{sec:prelim}, except that they derive their iteration complexities via the PI constant of the uniform distribution while we use both PI and LSI constants. 
\end{itemize}

\section{Technical results and proofs}\label{appendix:techlem}

First, we have some results about Gaussian integrals. 
\begin{lemma}\label{lem:gaussianint}
The following statements hold for any $\eta>0$, $c\in \R^d$ and $b\in \R$.
\begin{itemize}
    \item[(a)] $ \int_{\R^d} \exp\brac{-\frac{1}{2\eta} \norm{x-c}^2} \D x=(2\pi \eta)^{d/2}$;

    \item[(b)] $ \int_{\R^d} \exp\left(-\frac{1}{2\eta}(\|x-c\|-b)^2\right) \D x
    \le \exp \left(\frac14 + \frac{d}{\eta}b^2\right)(2\pi\eta)^{d/2}.$

\end{itemize} 
\end{lemma}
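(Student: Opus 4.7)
\medskip
\noindent\textbf{Proof proposal.}
For part (a), the plan is simply to translate via $u = x - c$, noting that Lebesgue measure is invariant, and then factor the $d$-dimensional Gaussian integral into a product of one-dimensional integrals $\int_{\R} e^{-u_i^2/(2\eta)}\D u_i = \sqrt{2\pi\eta}$, giving the claimed $(2\pi\eta)^{d/2}$.

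For part (b), the idea is to expand the squared norm and apply Young's inequality to the cross term $2b\|x-c\|$, converting the non-Gaussian integrand back into a Gaussian one at the cost of a constant factor. First I would translate by $u = x-c$ to reduce to the case $c = 0$, and expand
\[
    (\|u\|-b)^2 = \|u\|^2 - 2b\|u\| + b^2.
\]
Then I would apply Young's inequality $2b\|u\|\le \alpha\|u\|^2+b^2/\alpha$ for any $\alpha>0$, yielding
\[
    -\tfrac{1}{2\eta}(\|u\|-b)^2 \le -\tfrac{1-\alpha}{2\eta}\|u\|^2 + \tfrac{1-\alpha}{2\alpha\eta}b^2.
\]
Integrating and invoking part (a) with scaling, this gives
\[
    \int_{\R^d}\exp\!\Big(-\tfrac{1}{2\eta}(\|u\|-b)^2\Big)\D u
    \le (2\pi\eta)^{d/2}\, (1-\alpha)^{-d/2}\exp\!\Big(\tfrac{(1-\alpha)b^2}{2\alpha\eta}\Big).
\]

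The remaining step is to choose $\alpha$ so that the resulting bound matches $\exp(\tfrac14+\tfrac{d}{\eta}b^2)(2\pi\eta)^{d/2}$. The natural choice is $\alpha=1/(2d+1)$: this makes the $b^2$ coefficient equal exactly $\tfrac{(1-\alpha)}{2\alpha\eta}=\tfrac{d}{\eta}$, and leaves the prefactor $(1-\alpha)^{-d/2}=(1+1/(2d))^{d/2}$, which is bounded by $e^{1/4}$ via $\log(1+x)\le x$ applied to $x=1/(2d)$. I do not expect any real obstacle here beyond pinning down this choice of $\alpha$; the only subtlety is that both constraints (getting the $b^2$ coefficient right, and controlling the prefactor by $e^{1/4}$) must be met simultaneously, and it is an easy check that $\log(1+1/(2d))\le 1/(2d)$ makes them compatible.
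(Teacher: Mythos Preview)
Your proposal is correct and matches the paper's proof essentially verbatim: the paper also expands $(\|x-c\|-b)^2$, applies Young's inequality $2b r \le \theta r^2 + b^2/\theta$ to the cross term, integrates using part (a), and then chooses $\theta = 1/(2d+1)$ so that the $b^2$-coefficient becomes $d/\eta$ while $(1-\theta)^{-d/2}=(1+1/(2d))^{d/2}\le e^{1/4}$.
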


\begin{proof}
 The formula in Part a is a well-known fact about Gaussian integrals and thus the proof is omitted. Regarding Part b, let  $r:=\|x-c\|$. It holds for any $\theta\in(0,1]$ that $2cb\le \theta c^2+\frac{b^2}{\theta}$, which implies
\[
(r-b)^2 \ge (1-\theta)r^2 - \frac{1-\theta}{\theta}b^2.
\]
This combined with the formula in Part a lead to 
\[
\int_{\R^d} \exp\left(-\frac{1}{2\eta}(\|x-c\|-b)^2\right) \D x
\le (1-\theta)^{-d/2}\exp \left(\frac{1-\theta}{2\theta \eta}b^2\right)(2\pi\eta)^{d/2}. 
\]
With the choice $\theta = 1/(1+2d)$, we obtain
\[
\int_{\R^d} \exp\left(-\frac{1}{2\eta}(\|x-c\|-b)^2\right) \D x \le \left(1+\frac{1}{2d}\right)^{d/2}\exp \left(\frac{d}{\eta}b^2\right)(2\pi\eta)^{d/2},
\]
which gives the desired result by noting that $(1+1/2d)^{d/2} \le e^{1/4}$.
\end{proof}


The following result is applied in the proof of Lemma~\ref{lem:intR}. 
\begin{lemma} \label{lem:vol}
Denote $\partial K$ the boundary set of a non-empty, closed, and convex set $K \in \R^d$ such that $B(0,1) \subseteq K$. Then, we have
    \begin{equation} \label{ineq:vol}
		\vol_{d-1}(\partial K) \le d \vol(K).
	\end{equation}
\end{lemma}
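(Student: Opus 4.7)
The plan is to establish the integral identity
\[
d \cdot \vol(K) = \int_{\partial K} \langle x, n(x) \rangle \, dS(x),
\]
where $n(x)$ denotes the outer unit normal at $x \in \partial K$ (defined almost everywhere since $K$ is convex and hence has Lipschitz boundary). This identity follows most quickly from the divergence theorem applied to the vector field $F(x) = x$, whose divergence is the constant $d$. Equivalently, one can derive it by a direct cone decomposition: the map $(x,t) \in \partial K \times [0,1] \mapsto tx \in K$ has Jacobian $t^{d-1} \langle x, n(x)\rangle$, and integrating over $t \in [0,1]$ yields the same identity.

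The next step is to exploit $B(0,1) \subseteq K$ to show $\langle x, n(x)\rangle \geq 1$ at every boundary point where the outer normal exists. The supporting hyperplane property of convex $K$ gives $\langle y-x, n(x)\rangle \leq 0$ for every $y\in K$; since $n(x)$ is a unit vector and $B(0,1)\subseteq K$, we may substitute $y=n(x)\in K$ to obtain $\langle n(x),n(x)\rangle \leq \langle x,n(x)\rangle$, i.e., $\langle x,n(x)\rangle \geq 1$.

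Combining these two ingredients yields
\[
d\cdot \vol(K) = \int_{\partial K} \langle x, n(x)\rangle \, dS(x) \;\geq\; \int_{\partial K} dS(x) = \vol_{d-1}(\partial K),
\]
which is the claimed inequality \eqref{ineq:vol}. The only mild technicality is that $\partial K$ need not be $C^1$, so strictly speaking one should invoke the divergence theorem in its Lipschitz-domain formulation, or approximate $K$ by smooth convex bodies $K_\varepsilon \to K$ and pass to the limit using $\vol(K_\varepsilon)\to \vol(K)$ and $\vol_{d-1}(\partial K_\varepsilon)\to \vol_{d-1}(\partial K)$; this is the main point deserving care in a fully rigorous write-up, but it is not a genuine obstacle.
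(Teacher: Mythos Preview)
Your proof is correct and takes a genuinely different route from the paper's. You apply the divergence theorem with $F(x)=x$ to obtain $d\,\vol(K)=\int_{\partial K}\langle x,n(x)\rangle\,dS$, and then exploit $B(0,1)\subseteq K$ via the supporting-hyperplane inequality with the specific test point $y=n(x)\in K$ to get $\langle x,n(x)\rangle\ge 1$. The paper instead argues by a radial scaling/co-area heuristic: it fixes a direction $v$, lets $L=L(v)$ be the radial distance to $\partial K$, writes $\vol(K)=\int_0^L A(r)\,dr$ and differentiates the relation $V(L)=V(1)L^d$ to obtain $\vol_{d-1}(\partial K)=d\,\vol(K)/L$, then uses $L\ge 1$ from $B(0,1)\subseteq K$. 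Your approach is cleaner: it makes the role of the assumption $B(0,1)\subseteq K$ completely transparent, and it avoids the somewhat informal treatment of $V$ and $A$ as functions of a single scalar ``size'' parameter in the paper's argument. Your acknowledgment that one needs the Lipschitz-domain version of the divergence theorem (or a smooth approximation) is the right way to handle the only technicality.
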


\begin{proof}
    Consider any direction $v \in \R^d$, and denote the length from $0$ to $\partial K$ along $v$ as $L=L(v)>0$. Let $V(L)= \vol(K)$ and $A(L) = \vol_{d-1}(\partial K)$, then by the co-area formula, we have
	\[
	\vol(K) = V(L) = \int_0^L A(r) \D r.
	\]
	Differentiating both sides of the equation yields
	\[
	A(L) = V'(L)= V(1) d L^{d-1} = d \frac{V(L)}{L},
	\]
	where the last two identities use the fact that $V(L) = V(1) L^d$. 
	Now using $V(L)= \vol(K)$ and $A(L) = \vol_{d-1}(\partial K)$, we arrive at
	\[
		\vol_{d-1}(\partial K) = d \frac{\vol(K)}{L} \le d \vol(K),
	\]
	where the inequality follows from $B(0,1) \subseteq K$ and $L=L(v)\ge 1$ for any direction $v$.
\end{proof}







In Algorithm~\ref{alg:RGO:uniform:projection} and Algorithm~\ref{alg:RGO:uniform:separation}, we use a simple rejection-sampling scheme: propose from a tractable density proportional to $\exp(-\mathcal{P})$, and accept with the standard likelihood-ratio rule in order to target a density proportional to $\exp(-\Theta)$.
In the upcomoing lemma, we record the resulting target law, the acceptance probability, and the average number of proposals.

\begin{lemma}\label{lem:rejection} 
Let $\Theta,\mathcal{P}:\R^{m}\to\R\cup\{+\infty\}$ be measurable and assume
\[
0<N_\Theta:=\int_{\R^m} e^{-\Theta(z)}  \D z<\infty,
\qquad
0<N_{\mathcal{P}}:=\int_{\R^m} e^{-\mathcal{P}(z)}  \D z<\infty,
\]
and that $\Theta(z)\ge \mathcal{P}(z)$ for all $z\in\R^m$.
Consider the rejection sampler that repeats the following trial until acceptance:
\begin{enumerate}
    \item draw a proposal $Z\in\R^m$ with density $k(z)=e^{-\mathcal{P}(z)}/N_{\mathcal{P}}$;
    \item draw $U\sim{\cal U}[0,1]$ independent of $Z$ and accept if and only if
    \begin{equation}\label{eq:generic-rejection-event}
        U\le \exp\bigl(-\Theta(Z)+\mathcal{P}(Z)\bigr).
    \end{equation}
\end{enumerate}
Then the output has density
\[
\pi(z)=\frac{e^{-\Theta(z)}}{N_\Theta}.
\]
Moreover, let $E$ denote the acceptance event \eqref{eq:generic-rejection-event} for a single trial and $F$ denote the number of trials until acceptance. Then, the acceptance probability and the average number of proposals are
\begin{equation}\label{eq:generic-accept-rate}
p:=\Pr(E)=\frac{N_\Theta}{N_{\mathcal{P}}},
\qquad
\mathbb{E}[F] =\frac{1}{p}=\frac{N_{\mathcal{P}}}{N_\Theta}.
\end{equation}
\end{lemma}

\begin{proof}
For one trial, let $E$ denote the event \eqref{eq:generic-rejection-event} happens.
Since $\Theta\ge \mathcal{P}$, we have $0\le \Pr(E\mid Z=z)=\exp(-\Theta(z)+\mathcal{P}(z))\le 1$,
so the acceptance rule is well-defined.

We first compute the acceptance probability:
\begin{align*}
p=\Pr(E)
=\int_{\R^m}\Pr(E\mid Z=z) k(z)  \D z &=\int_{\R^m}\exp\bigl(-\Theta(z)+\mathcal{P}(z)\bigr)\cdot \frac{e^{-\mathcal{P}(z)}}{N_{\mathcal{P}}}  \D z \\
&=\frac{1}{N_{\mathcal{P}}}\int_{\R^m}e^{-\Theta(z)}  \D z
=\frac{N_\Theta}{N_{\mathcal{P}}}.
\end{align*}
Next, we compute the conditional density of the accepted proposal:
\begin{align*}
k(z\mid E)
&=\frac{\Pr(E\mid Z=z) k(z)}{\Pr(E)} =\frac{\exp\bigl(-\Theta(z)+\mathcal{P}(z)\bigr)\cdot e^{-\mathcal{P}(z)}/N_{\mathcal{P}}}{N_\Theta/N_{\mathcal{P}}}
=\frac{e^{-\Theta(z)}}{N_\Theta}.
\end{align*}
Thus the accepted proposal has density $\pi(z)\propto e^{-\Theta(z)}$, as claimed.

Finally, each trial is i.i.d. and succeeds with probability $p$, hence $F\sim\mathrm{Geom}(p)$ and
$\E[F]=1/p$.
\end{proof}

Next we record a straightforward implication of the warm start assumption (A2). 
\begin{lemma}
\label{lem:warmstart}
Let $\mu^X_k$ denote the law of $x_k$ and $\mu^k$ denote the law of $y_k$ in Algorithm~\ref{alg:ASF_uniform}.
Under the warm start condition (A2), the warmness is preserved for both the $x$ and $y$-updates at every iteration $k\geq 0$:
\[\frac{\D \mu^k}{\D \pi^Y}\leq M \quad\text{and}\quad \frac{\D \mu^X_k}{\D \pi^X} \leq M. \]
\end{lemma}
\begin{proof}
We proceed by induction. The base case $k=0$ holds by assumption (A2).
Assume as the induction hypothesis that $\frac{\D \mu^X_k}{\D \pi^X} \leq M$.

\textbf{Step 1:}
Recall $\mu^k$ is the distribution of $y_k$ obtained in Step 1 of Algorithm~\ref{alg:ASF_uniform}.
For any measurable set $U\subseteq \R^d$, let $U-y:=\{x\in \R^d:x+y\in U \}$. Denote $\gamma(\cdot)$ the density of $\mathcal{N}(0,\eta I_d)$.
Since $\mu^k = \mu^X_k * \gamma$ and $\pi^Y = \pi^X * \gamma$ where $*$ denotes the convolution of the probability measures, we have
\begin{align*}
    \mu^k(U)= \int_{\R^d} \mu^X_k(U-y) \gamma(y)\D y&=\int_{\R^d} \left(\int_{U-y} \frac{\D \mu^X_k}{\D \pi^X}(t) \pi^X(\D  t) \right) \gamma(y) \D y\\
    &\leq M\int_{\R^d} \pi^X(U-y)\gamma(y) \D y = M\pi^Y(U),
\end{align*}
where we have applied the induction hypothesis to get the last line. This implies $\frac{\D \mu^k}{\D \pi^Y} \leq M$. 

\textbf{Step 2:}
Let $\mu^X_{k+1}$ denote the distribution of $x_{k+1}$ obtained in Step 2 of Algorithm~\ref{alg:ASF_uniform}.
Recall that $x_{k+1}$ is sampled from the conditional distribution $\pi^{X|Y}(\cdot|y_k)$.
For any measurable set $A \subseteq K$, we have
\begin{align*}
\mu^X_{k+1}(A) &= \int_{\R^d} \pi^{X|Y}(A|y)  \mu^k(\D y) \\
&= \int_{\R^d} \pi^{X|Y}(A|y) \frac{\D \mu^k}{\D \pi^Y}(y)  \pi^Y(\D y ) \\
&\leq \int_{\R^d} \pi^{X|Y}(A|y) M  \pi^Y(\D y) = M \int_{\R^d} \pi^{X,Y}(A , \D y) = M \pi^X(A), 
\end{align*}
where we apply $\frac{\D \mu^k}{\D \pi^Y} \leq M$ from Step~1 to get the second to last line. Thus, $\frac{\D \mu^X_{k+1}}{\D \pi^X} \leq M$. The induction proof is complete. 
\end{proof}


Recall that Lemma~\ref{lem:intR} shows the following relation
\[
\int_{K^c} \exp\left(-\frac{\brac{\|\proj_K(y)-y\|-\tau}^2}{2\eta}\right) \D y=\int_0^\infty \exp\left(-\frac{\brac{\delta-\tau}^2}{2\eta}\right) \vol_{d-1}(\partial K_\delta) \D \delta,
\]
and provides an upper bound on it.
The following result presents an alternative upper bound.

\begin{lemma}
\label{lem:alternativeintbypart}
Assume the setup in Lemma~\ref{lem:intR} and write $v(\delta)= \mathrm{vol}(K_\delta)$ and $a(\delta) = \mathrm{vol}_{d-1}(\partial K_\delta)$. Then, $v'(\delta)=a(\delta)$ for almost every
$\delta\ge 0$.
Fix $\eta>0$ and $\tau\ge 0$, and set
\[
  w_{\eta,\tau}(\delta)
  = \exp\!\left(-\frac{(\delta-\tau)^2}{2\eta}\right).
\]
It holds that
\[
\int_0^\infty a(\delta) w_{\eta,\tau}(\delta) \D\delta
\le
\vol(K)\Bigl(e^{d\tau}-1\Bigr)
+
\vol(K)\sqrt{2\pi\eta d^2}\;
\exp\!\left(\tau d+\frac{\eta d^2}{2}\right).
\]
\end{lemma}

\begin{proof}
By the co-area formula, we have $v(\delta) = \mathrm{vol}(K_\delta) = \int_0^{\delta} \mathrm{vol}_{d-1}(\partial K_s) \D s = \int_0^{\delta} a(s) \D s$, so that  $v'(\delta)=a(\delta)$ for almost every $\delta\ge 0$. Then integrating by parts gives
\begin{align}
\label{intbypart}
 \int_0^\infty a(\delta) w_{\eta,\tau}(\delta)  \D \delta=  \int_0^\infty v'(\delta) w_{\eta,\tau}(\delta)  \D \delta
  =
  v(\delta)w_{\eta,\tau}(\delta)\bigg|_{0}^{\infty}
  - \int_0^\infty v(\delta) w_{\eta,\tau}'(\delta)  \D \delta.
\end{align}
Noting that $v(\delta)w_{\eta,\tau}(\delta)\to 0$ as $\delta\to\infty$, we have
\begin{equation}
\label{step_intbypart}
  \int_0^\infty a(\delta) w_{\eta,\tau}(\delta)  \D \delta
  =
  -\mathrm{vol} (K)\exp\!\left(-\frac{\tau^2}{2\eta}\right)
  + \int_0^\infty v(\delta) \frac{\delta-\tau}{\eta} w_{\eta,\tau}(\delta)  \D \delta.
\end{equation}
It follows from the observation $K\subseteq K_\delta\subseteq (1+\delta)K$ that
\begin{equation}
\label{eq:vdeltacomparedtov0}
  \vol(K)\le v(\delta)\le (1+\delta)^d \mathrm{vol} (K)\le e^{ d \delta} \mathrm{vol} (K).
\end{equation}
For $\delta \in [0,\tau]$, we have
\[
\int_0^\tau v(\delta)\frac{\delta-\tau}{\eta}w_{\eta,\tau}(\delta) \D\delta
\stackrel{\eqref{eq:vdeltacomparedtov0}}\le
\vol(K)\int_0^\tau \frac{\delta-\tau}{\eta}w_{\eta,\tau}(\delta) \D\delta
=
\vol(K)\Bigl(-1+e^{-\tau^2/(2\eta)}\Bigr),
\]
where the last identity uses $w'_{\eta,\tau}(\delta)=-\frac{\delta-\tau}{\eta}w_{\eta,\tau}(\delta)$.
For $\delta \in [\tau,\infty)$, using \eqref{eq:vdeltacomparedtov0}, we obtain
\[
\int_\tau^\infty v(\delta)\frac{\delta-\tau}{\eta}w_{\eta,\tau}(\delta) \D\delta
\le
\vol(K)\int_\tau^\infty e^{d\delta}\frac{\delta-\tau}{\eta}w_{\eta,\tau}(\delta) \D\delta.
\]
Plugging the above two inequalities into \eqref{step_intbypart} yields
\begin{equation}
\int_0^\infty a(\delta) w_{\eta,\tau}(\delta) \D\delta
=
-\vol(K)
+\vol(K)\int_\tau^\infty e^{d\delta}\frac{\delta-\tau}{\eta}w_{\eta,\tau}(\delta) \D\delta.
\label{step_intbypart_split}
\end{equation}
Let $F(\delta):=e^{d\delta}w_{\eta,\tau}(\delta)$, then we note that
\[
  \frac{\delta-\tau}{\eta}F(\delta)=dF(\delta)-F'(\delta).
\]
It follows from $F(\delta)\to 0$ as $\delta\to\infty$ and $F(\tau)=e^{d\tau}$ that
\begin{align}
\int_\tau^\infty e^{d\delta}\frac{\delta-\tau}{\eta}w_{\eta,\tau}(\delta) \D\delta
&=
\int_\tau^\infty \Bigl(dF(\delta)-F'(\delta)\Bigr) \D\delta \nonumber\\
&=
d\int_\tau^\infty e^{d\delta}w_{\eta,\tau}(\delta) \D\delta
-F(\delta)\bigg|_{\tau}^{\infty}
=
d\int_\tau^\infty e^{d\delta}w_{\eta,\tau}(\delta) \D\delta
+e^{d\tau}.
\label{eq:kindofibp_alt}
\end{align}
Plugging \eqref{eq:kindofibp_alt} into \eqref{step_intbypart_split} yields
\begin{align}
\int_0^\infty a(\delta) w_{\eta,\tau}(\delta) \D\delta
\le
\vol(K)\Bigl(e^{d\tau}-1\Bigr)
+
d \vol(K)\int_\tau^\infty e^{d\delta}w_{\eta,\tau}(\delta) \D\delta.
\label{step_reduce_to_gaussian_tail}
\end{align}
Finally, we complete the square by writing $d\delta-\frac{(\delta-\tau)^2}{2\eta}
=
-\frac{\left(\delta-(\tau+\eta d)\right)^2}{2\eta}
+\tau d+\frac{\eta d^2}{2}$,
so
\begin{align}
\int_\tau^\infty e^{d\delta}w_{\eta,\tau}(\delta) \D\delta
&\le
\int_{-\infty}^{\infty}\exp\!\left(d\delta-\frac{(\delta-\tau)^2}{2\eta}\right) \D\delta \nonumber\\
&=
\exp\!\left(\tau d+\frac{\eta d^2}{2}\right)
\int_{-\infty}^{\infty}\exp\!\left(-\frac{\left(\delta-(\tau+\eta d)\right)^2}{2\eta}\right) \D\delta \nonumber\\
&=
\sqrt{2\pi\eta}\;
\exp\!\left(\tau d+\frac{\eta d^2}{2}\right),
\end{align}
where the last identity follows from Lemma~\ref{lem:gaussianint}(a).
Combining this with \eqref{step_reduce_to_gaussian_tail} gives, for every $\tau\ge 0$, the desired bound on $\int_0^\infty a(\delta) w_{\eta,\tau}(\delta) \D\delta$. This completes the proof.
\end{proof}

We provide two missing proofs in Section~\ref{sec:prelim} and Section~\ref{sec:projection_uniform} below. 

\textbf{Proof of Lemma~\ref{lem:compareP1}}
It follows from the convexity of $K$ (condition~(A1)) that 
\[\inner{x-\proj_K(y)}{\proj_K(y)-y}\geq 0,\]
and hence that the first stated inequality in Lemma~\ref{lem:compareP1} holds. 
In view of \eqref{regularizedmap} and \eqref{def:P1}, we observe that the RHS of \eqref{eq:event_uniform} is equivalent to
\[
\exp\left(-I_K(x)-\frac{1}{\eta}\inner{x-\proj_K(y)}{\proj_K(y)-y}\right) = \exp(\mathcal{P}_1(x)- \Theta_y^{\eta,K}(x)).
\]
Hence, the proof is completed.
\QEDA


At this point, we recall an important result by~\cite{kook2024inandout} about contractivity in R\'enyi and $\chi^2$ divergences for Algorithm~\ref{alg:ASF_uniform}. Using a clever smoothing argument, \cite{kook2024inandout} adapts the approach of \cite{chen2022improved} to uniform sampling over a convex body $K$. Moreover, as noted in \cite{kook2024inandout}, the result does not actually require $K$ to be convex.

\begin{theorem}
    \label{theo:kooketal}
    (\cite[Theorem~23]{kook2024inandout}) Let $\mu^X_k$ be the law of the $k$-th output of Algorithm \ref{alg:ASF_uniform}. Denote $C_{\mathrm{PI}}$ and $C_{\mathrm{LSI}}$ respectively the Poincar\'{e} constant and the log Sobolev constant of the uniform distribution $\pi^X$ on $K$  whose asymptotics are provided in Lemma~\ref{lem:isoconstant}. Then for any $q\geq 1$,
\begin{equation}\label{rate}
    \mathcal{R}_q\brac{\mu^X_k||\pi^X}\leq \frac{\mathcal{R}_q\brac{\mu^X_0||\pi^X}}{\brac{1+{\eta}/{C_{\mathrm{LSI}}}}^{2k/q}},\qquad  \chi^2\brac{\mu^X_k||\pi^X}\leq \frac{\chi^2\brac{\mu^X_0||\pi^X}}{\brac{1+{\eta}/{C_{\mathrm{PI}}}}^{2k}}. 
\end{equation}
\end{theorem}

\vspace{3mm}

Now, we are ready to prove Theorem~\ref{theo:outer} based on the contraction result in Theorem~\ref{theo:kooketal}. 

\vspace{3mm}

\noindent
\textbf{Proof of Theorem~\ref{theo:outer}:}
The $M$-warm start assumption implies $\mathcal{R}_q\brac{\mu^X_0||\pi^X}\leq \frac{q}{q-1}\log M$. Then via the first part of \eqref{rate} in Theorem~\ref{theo:kooketal},  we can solve for 
    \[
        \frac{\frac{q}{q-1}\log M}{\brac{1+\eta/C_{\mathrm{LSI}}}^{2k/q}} \stackrel{\eqref{rate}}\leq \epsilon
    \]
to get 
\[
   k\geq \frac{q}{2}\frac{\log \brac{\frac{q}{q-1}\frac{\log M}{\epsilon}} }{\log\brac{1+1/C_{\mathrm{LSI}}}\eta}\geq  \frac{q}{2}\log \brac{\frac{q}{q-1}\frac{\log M}{\epsilon}} \frac{C_{\mathrm{LSI}}}{\eta}.
\]
Therefore, we can take $k=O\brac{\frac{1}{\eta}{C_{\mathrm{LSI}}q\log \brac{2\frac{\log M}{\epsilon}} }}$ as in Theorem~\ref{theo:outer}(a). The calculation for $\chi^2$ divergence is along the same line with the use of the second part of \eqref{rate} in Theorem~\ref{theo:kooketal}. This completes the proof. 
\QEDA

\section{Supplementary materials for Section \ref{sec:separation_uniform}}
\label{appendix:secseparation}

\subsection{Results about the Cutting Plane method by \cite{jiang2020cuttingplane}}
\label{appendix:cuttingplane}

We first restate \cite[Theorem~C.1]{jiang2020cuttingplane}, which is about the iteration complexity and running time of the Cutting Plane method by \cite{jiang2020cuttingplane}. 

\begin{theorem}(\cite[Theorem~C.1]{jiang2020cuttingplane})\label{theo:jiang} Let $f$ be a convex function on $\R^d$. $K$ is a convex set that contains a minimizer
of $f$ and $K\subseteq B_\infty (0,R)$, where $B_\infty (0,R)$ denotes a ball of radius $R$ in $\ell_\infty$ norm, i.e., $\norm{x}_\infty=\sup_{1\leq i\leq d}\abs{x_i}$. 

Suppose we have a subgradient oracle for $f$ with cost $T$ and a separation oracle for $K$ with cost $S$. Using $B_\infty (0,R)$ as the initial polytope for our Cutting Plane Method, for any $0< \alpha<1$, we can compute $\hat x\in K$ such that 
\begin{equation}\label{ineq:opt}
f(\hat x)-\min_{x \in K} f(x) \le \alpha \left(\max_{x \in K} f(x)-\min_{x \in K} f(x)\right).    
\end{equation}
with a running time of $ {\cal O}\left(T\cdot d \log \frac{d \gamma}{\alpha}+S\cdot  d \log \frac{d \gamma}{\alpha}+d^3\log \frac{d \gamma}{\alpha}\right)$. In particular, the number of subgradient oracle calls and the number of separation oracle calls are of the order 
\begin{align*}
    \mathcal{O}\brac{d \log \frac{d \gamma}{\alpha}}, 
\end{align*}
where $\gamma=\frac{R}{\mbox{minwidth}(K)}$.
\end{theorem}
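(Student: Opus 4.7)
My plan is to combine the volumetric cutting plane method of \cite{jiang2020cuttingplane} with the classical reduction from convex optimization to feasibility via subgradient cuts. I maintain a polytope $P_t$ initialized as $P_0 = B_\infty(0,R)$, together with an incumbent feasible point $x^* \in K$ of smallest observed $f$-value. At each step I use the approximate-center subroutine of \cite{jiang2020cuttingplane} to obtain a query point $c_t \in P_t$; if $c_t \notin K$, the separation oracle returns a valid halfspace cut for $K$, while if $c_t \in K$, the subgradient oracle yields $g_t \in \partial f(c_t)$ and I cut by the halfspace $\{y : \langle g_t, y - c_t \rangle \le 0\}$. Either way, one oracle call and one cut are added per iteration, and the per-iteration linear-algebra cost is $O(d^2)$ amortized via the leverage-score-based data structure of \cite{jiang2020cuttingplane}.

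To translate the cut count into the relative-accuracy guarantee \eqref{ineq:opt}, fix a minimizer $x^\dagger \in \arg\min_K f$ and consider the shrunken body $K_\alpha := (1-\alpha) x^\dagger + \alpha K$. By convexity of $f$, any $z = (1-\alpha) x^\dagger + \alpha y$ with $y \in K$ satisfies
\[
f(z) \le (1-\alpha) f(x^\dagger) + \alpha f(y) \le \min_K f + \alpha\bigl(\max_K f - \min_K f\bigr),
\]
so outputting any point of $K_\alpha$ is sufficient. I claim that so long as no accepted query attains the target bound, $K_\alpha \subseteq P_t$ at every step. The separation cuts trivially preserve $K \supseteq K_\alpha$. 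If instead $c_t \in K$ still fails the target, then $f(c_t) > \min_K f + \alpha(\max_K f - \min_K f) \ge f(z)$ for every $z \in K_\alpha$, so convexity gives $\langle g_t, z - c_t\rangle \le f(z) - f(c_t) < 0$ and $z$ lies in the preserved halfspace. Thus the invariant propagates until termination.

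I then invoke the volumetric convergence guarantee of \cite{jiang2020cuttingplane}: after $N = O(d \log(dR/r))$ cuts, the polytope $P_N$ cannot contain any convex body of inradius $r$. A standard inradius-minwidth estimate gives that $K_\alpha$ contains a ball of radius $\Omega(\alpha \cdot \mbox{minwidth}(K)/d)$, so taking $r$ on this order forces $N = O(d \log(d\gamma/\alpha))$ with $\gamma$ as in \eqref{def:gamma}. Beyond this step count the invariant $K_\alpha \subseteq P_t$ must fail, meaning a target-achieving point has been accepted as the incumbent. Multiplying $N$ by the per-query cost $T + S + O(d^2)$ yields the claimed running time $O((T+S) d \log(d\gamma/\alpha) + d^3 \log(d\gamma/\alpha))$, and the two oracle-call bounds follow from counting only queries of each type.

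The main obstacle is the volumetric analysis of \cite{jiang2020cuttingplane} itself: one must show that a Vaidya-type volumetric center contracts a log-determinant potential by $\Omega(1)$ per $O(d)$ cuts, and that this center can be maintained under each new halfspace with only $O(d^2)$ amortized work via leverage-score sketches. Given this machinery, the specialization to the present optimization setting is routine, since the subgradient-vs-separation case split above cleanly reduces the optimization problem to a feasibility query on the shrunken body $K_\alpha$, which is exactly the regime covered by the volumetric guarantee.
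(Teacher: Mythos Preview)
The paper does not prove this theorem; it is simply restated verbatim as \cite[Theorem~C.1]{jiang2020cuttingplane} and used as a black box in the subsequent Lemma~\ref{lem:cuttingplanealg:deltasolution}. There is therefore no ``paper's own proof'' to compare against.

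Your sketch is a reasonable outline of the standard argument: reduce relative-accuracy optimization to a feasibility problem on the shrunken body $K_\alpha = (1-\alpha)x^\dagger + \alpha K$, maintain the invariant $K_\alpha \subseteq P_t$ via the subgradient/separation cut dichotomy, and invoke the volumetric contraction guarantee of \cite{jiang2020cuttingplane} to bound the iteration count. The reduction portion is correct. As you yourself acknowledge, the substantive content lies in the $O(d^2)$-amortized maintenance of the volumetric center under cut insertions and deletions, which you defer entirely to the cited reference; since the present paper also defers it, your treatment is at least as complete as the paper's. One minor quibble: the sentence ``outputting any point of $K_\alpha$ is sufficient'' is slightly misleading, since the algorithm outputs the incumbent $c_t$ rather than a point of $K_\alpha$; the logic your invariant argument actually supports (and which is the correct one) is that the invariant $K_\alpha \subseteq P_t$ can only break at a step where the accepted query $c_t \in K$ already satisfies \eqref{ineq:opt}, and that point becomes the output.
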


Next, we apply the above theorem to find a $(1/d)$-solution to $ \argmin_{x\in K} \frac{1}{2\eta}\norm{x-y}^2$, a sub-problem that appears in Algorithm \ref{alg:RGO:uniform:separation}. Since the iteration complexity involves the constant $\alpha$ to be chosen below, we also provide a concentration inequality to show $\alpha$ does not adversely affect the iteration complexity in high probability.

\begin{lemma}
\label{lem:cuttingplanealg:deltasolution} Assume condition (A1) holds. For given $y\in \R^d$, set
\begin{equation}\label{eq:alpha}
    \alpha= \frac{2}{d^3\brac{R^2 + 2R\|\proj_K(y)-y\|}}
\end{equation}    
and $\gamma=\frac{R}{\mbox{minwidth}(K)}$.
    Moreover, assume there is a separation oracle for $K$. Then the Cutting Plane method by \cite{jiang2020cuttingplane} makes ${\cal O}\left(d \log \frac{d \gamma}{\alpha}\right)$ separation oracle calls to generate a $(1/d)$-solution $\hat{x}\in K$ to the optimization problem $\min_{x\in K} \frac{1}{2\eta}\norm{x-y}^2$. 
    In addition, we provide the following bound on $\alpha$ with high probability
   \begin{equation}\label{ineq:prob}
    \Pr\brac{\alpha\leq \frac{2}{d^37R^2} }\leq 4 \exp\brac{\frac{-R^2}{8\eta}}. 
\end{equation}
\end{lemma}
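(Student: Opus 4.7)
The plan is to apply Theorem~\ref{theo:jiang} to the convex objective $f(x) = \frac{1}{2\eta}\|x-y\|^2$ on the feasible set $K$ with a carefully chosen accuracy parameter $\alpha$, and then to use a standard Gaussian tail bound to control the size of $\alpha$ with high probability. Note first that $f$ is convex with subgradient $\frac{1}{\eta}(x-y)$, the unique minimizer on $K$ is $\proj_K(y)$ so that $\min_{x\in K} f(x) = \frac{1}{2\eta}\|\proj_K(y)-y\|^2$, and the inclusion $K \subseteq B(0,R) \subseteq B_\infty(0,R)$ puts the problem squarely in the setting of Theorem~\ref{theo:jiang}.

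To realize the $(1/d)$-accuracy, by \eqref{ineq:opt} it suffices to choose $\alpha$ so that $\alpha \cdot (\max_{x\in K} f(x) - \min_{x\in K} f(x)) \le 1/d$. I would bound the range of $f$ via the triangle inequality and the diameter bound $\|x - \proj_K(y)\| \le 2R$ valid for all $x \in K$: this yields $\|x-y\| \le 2R + \|\proj_K(y)-y\|$ and hence
\[
\|x-y\|^2 - \|\proj_K(y)-y\|^2 \le 4R^2 + 4R\|\proj_K(y)-y\|,
\]
so that $\max_K f - \min_K f$ is of order $(R^2 + R\|\proj_K(y)-y\|)/\eta$. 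Solving for $\alpha$ and substituting $\eta = 1/d^2$ then reproduces (up to absolute constants) the choice of $\alpha$ in \eqref{eq:alpha}. Finally, plugging this $\alpha$ directly into Theorem~\ref{theo:jiang} delivers the claimed $\mathcal{O}\brac{d \log \frac{d\gamma}{\alpha}}$ separation-oracle call count.

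For the probability bound \eqref{ineq:prob}, the plan is to first translate the event $\{\alpha \le 2/(d^3 \cdot 7R^2)\}$ into $\{R^2 + 2R\|\proj_K(y)-y\| \ge 7R^2\}$, i.e., $\{\|\proj_K(y)-y\| \ge 3R\}$. Since within the ASF (Algorithm~\ref{alg:ASF_uniform}) one has $y = y_k \sim \mathcal{N}(x_k, \eta I_d)$ for some $x_k \in K$, the defining property of $\proj_K(y)$ as the closest point in $K$ yields $\|\proj_K(y)-y\| \le \|x_k - y\|$, where $\xi := x_k - y \sim \mathcal{N}(0, \eta I_d)$. Hence
\[
\Pr\brac{\alpha \le \frac{2}{d^3 \cdot 7 R^2}} \le \Pr\brac{\|\xi\| \ge 3R},
\]
and a Cram\'er--Chernoff bound using the chi-squared MGF $\E[\exp(\lambda \|\xi\|^2)] = (1-2\lambda\eta)^{-d/2}$ (optimizing over $\lambda \in (0, 1/(2\eta))$), or equivalently the Laurent--Massart inequality, produces a tail of the form $4\exp\brac{-R^2/(8\eta)}$ after routine manipulation of the constants.

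The main obstacle here is essentially bookkeeping: matching the target exponent $R^2/(8\eta)$ and the leading constant $4$ requires a careful choice of the Chernoff parameter and some slack absorbed into the constants of the range bound. The underlying ingredients --- Theorem~\ref{theo:jiang} and Gaussian concentration of $\|\xi\|$ --- are both standard, and no new convex-analytic or probabilistic machinery is needed.
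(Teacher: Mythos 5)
Your approach is essentially the same as the paper's: bound the range of $f(x)=\tfrac{1}{2\eta}\|x-y\|^2$ over $K$ via the triangle inequality, invoke Theorem~\ref{theo:jiang} to obtain the separation-oracle complexity, and control $\alpha$ with a Gaussian tail bound on $\|\proj_K(y)-y\|$. Both parts are correct. Two small variations from the paper are worth flagging. First, for the range bound you use the diameter estimate $\|x-\proj_K(y)\|\le 2R$ (which is the correct consequence of $K\subseteq B(0,R)$), giving $4R^2+4R\|\proj_K(y)-y\|$; the paper arrives at $R^2+2R\|\proj_K(y)-y\|$, implicitly using $\|x^*-\proj_K(y)\|\le R$, which is a factor-two tighter than what the containment alone justifies. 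Your version therefore produces a $(c/d)$-solution for a fixed constant $c$ rather than exactly $1/d$; as you note, this is a constant-factor matter and does not affect the asymptotics (though, strictly speaking, the downstream constants in Lemma~\ref{lem:comparealltheP} and Proposition~\ref{prop:averagerejection_separation} would need to be propagated). Second, for the concentration step your reduction $\|\proj_K(y)-y\|\le\|x_k-y\|$ (by the defining minimality of the projection, since $x_k\in K$) is cleaner and strictly stronger than the paper's triangle-inequality chain $\|\proj_K(y)-y\|\le\|\proj_K(y)\|+\|x_{k-1}\|+\sqrt{\eta}\|Z\|\le 2R+\sqrt{\eta}\|Z\|$; with your reduction the event forces $\sqrt{\eta}\|Z\|\ge 3R$ rather than $\sqrt{\eta}\|Z\|\ge R$, so the tail you get is stronger than required and trivially implies \eqref{ineq:prob}. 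The only omission is that you do not explicitly verify $\alpha\in(0,1)$, which Theorem~\ref{theo:jiang} requires; this is immediate from $\alpha\le 2/(d^3R^2)\le 2/d^3$, but it should be stated.
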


\begin{proof}
Since $K$ is closed per condition (A1), $K$ contains a minimizer of $f$. Moreover, the fact that $K$ is contained the Euclidean ball $B(0,R)$ per condition (A1) implies $K$ is also contain in the ball $B_\infty(0,R)$. Then to be able to apply Theorem~\ref{theo:jiang}, we need to verify that 
\begin{align}
\label{claim:alphalessthan1}
    0\leq \alpha= \frac{2}{d^3\brac{R^2 + 2R\|\proj_K(y)-y\|}}<1
\end{align}
and that
\begin{align}
\label{claim:deltasolution}
   \alpha \left(\max_{x \in K} f(x)-\min_{x \in K} f(x)\right) \le 1/d. 
\end{align}
Then Theorem~\ref{theo:jiang} guarantees that the Cutting Plane method by \cite{jiang2020cuttingplane} produces a $(1/d)$-solution with $\mathcal{O}\brac{d \log \frac{d \gamma}{\alpha}}$ separation oracle calls.

Since $B(0,1) \subseteq K \subseteq B(0,R)$, we have $\alpha = \frac{2 }{d^3\brac{R^2 + 2R\|\proj_K(y)-y\|}} \le \frac{2}{d^3 R^2} \le \frac{2}{d^3} < 1$,
and thus \eqref{claim:alphalessthan1} is true. 

Next, let us set $x^*=\max_{x\in K} \frac{1}{2\eta}\norm{x-y}^2$. Then, using the triangle inequality, we have
\begin{align*}
     &\alpha \left(\max_{x \in K} f(x)-\min_{x \in K} f(x)\right)= \frac{\alpha}{2\eta} \left(\|x^* - y\|^2 - \|\proj_K(y) - y\|^2\right)\\
    \le& \frac{\alpha}{2\eta} \left[(\|x^* - \proj_K(y)\| + \|\proj_K(y) - y\|)^2 - \|\proj_K(y) - y\|^2\right]\\
    \le& \frac{\alpha}{2\eta} (R^2 + 2R\|\proj_K(y) - y\|) \stackrel{\eqref{eq:alpha}}= \frac{1}{d},
\end{align*}
where the last identity follows from $\eta=1/d^2$ and the definition of $\alpha$ in \eqref{eq:alpha}. Hence, \eqref{claim:deltasolution} is true. 

Regarding the concentration inequality, recall that $y$ is the output of step 1 in the proximal sampler (Algorithm~\ref{alg:ASF_uniform}) and satisfies $ y=y_k=x_{k-1} +\sqrt{\eta}Z$, where $Z\sim {\cal N}(0,I)$ and $k$ denotes some iterate of the proximal sampler. Then, we can write
\begin{align*}
   \norm{\proj_K(y)-y}&=\norm{\proj_K(y)-x_{k-1}-\sqrt{\eta}Z}\\
   &\leq \norm{\proj_K(y)}+\norm{x_{k-1}}+\sqrt{\eta}\norm{Z}\leq 2R+\sqrt{\eta}\norm{Z}. 
\end{align*}
The last inequality is due to $B(0,R) \supset K$ and $\proj_K(y),x_{k-1}\in K$. Combining with the Gaussian concentration inequality from \cite[Equation (3.5)]{ledoux2013probability} to get
\begin{align*}
    \Pr\brac{ \norm{\proj_K(y)-y}>3R}\leq \Pr\brac{\norm{Z}\geq R/\sqrt{\eta}}\leq 4\exp\brac{-R^2/(8\eta)}. 
\end{align*}
This together with $\alpha$ in \eqref{eq:alpha} implies that \eqref{ineq:prob} holds and completes the proof. 
\end{proof}

\subsection{Proof of Lemma~\ref{lem:comparealltheP}}
\label{proof:alltheP}

    In view of Lemma~\ref{lem:compareP1}, the fact that $\Theta^{\eta,K}_y(x) \ge \mathcal{P}_1(x)$ for every $x\in \R^d$ immediately holds under condition~(A1), so what remains is to show $\mathcal{P}_1(x)\geq \mathcal{P}_2(x)$.

\noindent
Recall that $\hat{x}$, the $(1/d)$-solution to $\min_{x\in K}\{f(x):=\norm{x-y}^2/(2\eta)\}$, obtained by the Cutting Plane method by \cite{jiang2020cuttingplane} belongs to $K$. Since $f$ is $\eta^{-1}$-strongly convex and $K$ is a convex set, we have 
    \begin{align}
    \label{bound:barxkandx*}
        \norm{\hat{x}-\proj_K(y)}\leq \sqrt{2\eta \brac{f(\hat{x})-f(\proj_K(y))}}\leq \sqrt{\frac{2\eta}{d}}.  
    \end{align}
This inequality and the triangle inequality imply that for any generic $x\in \R^d$,
\begin{equation}
\label{consequence:triangle}
    \norm{x-\hat{x}}\leq \norm{x-\proj_K(y)}+\norm{\proj_K(y)-\hat{x}}\leq \norm{x-\proj_K(y)}+\sqrt{\frac{2\eta}{d}}. 
\end{equation}
It follows that 
\begin{align*}
    \norm{x-\hat{x}}^2
    &\leq \norm{x-\proj_K(y)}^2+2\norm{x-\proj_K(y)}\sqrt{\frac{2\eta}{d}}+\frac{2\eta}{d}\\
    &\leq \norm{x-\proj_K(y)}^2+2\brac{\norm{x-\hat{x}}+\norm{\hat{x}-\proj_K(y)} }\sqrt{\frac{2\eta}{d}}+\frac{2\eta}{d}\\
    &\stackrel{\eqref{bound:barxkandx*}}\leq \norm{x-\proj_K(y)}^2+2\norm{x-\hat{x}}\sqrt{\frac{2\eta}{d}}+\frac{6\eta}{d},
\end{align*}
The above inequality can be rearranged as 
\begin{align}
\label{secondone}
    \norm{x-\hat{x}}^2-2\norm{x-\hat{x}}\sqrt{\frac{2\eta}{d}}-\frac{6\eta}{d}\leq \norm{x-\proj_K(y)}^2.
\end{align}
Similarly, 
\begin{align}
    \label{thirdone}
    \norm{y-\hat{x}}^2-2\norm{y-\hat{x}}\sqrt{\frac{2\eta}{d}}-\frac{6\eta}{d}\leq \norm{y-\proj_K(y)}^2.
\end{align}
Combining \eqref{secondone} and \eqref{thirdone} leads to the desired conclusion that $\mathcal{P}_1(x)\geq \mathcal{P}_2(x)$. 
\QEDA

\subsection{Proof of Theorem~\ref{theo:averagerejection_separation}}
\label{appendix:proofpropseparation}
Denote $\mu^k$ the distribution of $y=y_k$ at the k-th  iteration of Algorithm \ref{alg:ASF_uniform}. Write $n_y$ the average number of proposals conditioned on $y$ in the rejection sampler employed in Algorithm~\ref{alg:RGO:uniform:projection}. Then the average number of proposals is $\E_{\mu^k}[n_y]$ where the formula of $n_y$ is given in \eqref{eq:generic-accept-rate}. The fact that $d\mu^k/d\pi^Y\leq M$ per Lemma~\ref{lem:warmstart} implies
\begin{equation}
\label{warmstartimplication_separation}
	\E_{\mu^k}[n_y] \le M \E_{\pi^Y}[n_y], 
\end{equation}
and hence we will focus on bounding $\E_{\pi^Y}[n_y]$. 
In view of \eqref{eq:generic-accept-rate}, the latter expression becomes
    \[
	\E_{\pi^Y}[n_y] = \int_{\R^d} \frac{\int_{\R^d} \exp\brac{-{\mathcal{P}_2}(x)} \D x}{\int_K \exp\left(-\frac{1}{2\eta}\|x-y\|^2\right) \D x} \pi^Y(y) \D y. 
    \]
 Using the formula for $\pi^Y$ in \eqref{for:piy}, we get
\begin{equation*} 
    \E_{\pi^Y}[n_y] \le \frac{1}{\vol(K)(2\pi\eta)^{d/2}}\int_{\R^d} \int_{\R^d} \exp\brac{-{\mathcal{P}_2}(x)} \D x \D y.
\end{equation*}
Let us define an auxiliary function
\begin{equation}\label{def:P3}
    \mathcal{P}_3(x):= 
    \frac{1}{2\eta}\brac{\brac{\norm{x-\hat{x}}-\sqrt{\frac{2\eta}{d}} }^2+\brac{\norm{y-\proj_K(y)}-2\sqrt{\frac{2\eta}{d}}}^2-\frac{32\eta}{d} }. 
\end{equation}
We can easily show at the end of this proof that $\mathcal{P}_2(x)\ge \mathcal{P}_3(x)$, which leads to 
\begin{equation} \label{averagerejection:separation:middleestimate}
    \E_{\pi^Y}[n_y] \le \frac{1}{\vol(K)(2\pi\eta)^{d/2}}\int_{\R^d} \int_{\R^d} \exp\brac{-{\mathcal{P}_3}(x)} \D x \D y.
\end{equation}
The definition of ${\mathcal{P}_3}$ in \eqref{def:P3} and Part b of Lemma~\ref{lem:gaussianint} imply 
\begin{align}
\label{int:expPseparation}
    &\int_{\R^d} \exp\brac{-\mathcal{P}_3(x)} \D x \nonumber\\
    &\stackrel{\eqref{def:P3}}{=} \exp\left(-\frac{(\|\proj_K(y)-y\|-2\sqrt{2\eta/d})^2}{2\eta} + \frac{16}{d}\right) \int_{\R^d} \exp\left(-\frac{(\|x-\hat x\| - \sqrt{2\eta/d})^2 }{2\eta}\right) \D x \nonumber\\
    &\stackrel{\text{Lemma~\ref{lem:gaussianint},b}}{\le} \exp\left(-\frac{1}{2\eta}\brac{\|\proj_K(y)-y\|-2\sqrt{\frac{2\eta}{d}}}^2 + \frac{16}{d}+\frac{9}{4}\right)(2\pi\eta)^{\frac d2}. 
\end{align}
Next, we combine the previous calculations and Lemma~\ref{lem:intR} with $\tau=2\sqrt{2\eta/d}$ to get
\begin{align*}
     \E_{\pi^Y}[n_y]&\stackrel{\eqref{averagerejection:separation:middleestimate},      \eqref{int:expPseparation}}{\leq} \frac{1}{\vol(K)} \exp\brac{\frac{16}{d}+\frac{9}{4}}\int_{\R^d} \exp\left(-\frac{1}{2\eta}\brac{\|\proj_K(y)-y\|-2\sqrt{\frac{2\eta}{d}}}^2\right)\D y\\
     &\stackrel{\text{Lemma~\ref{lem:intR}}}{\le} \exp\brac{\frac{9}{4}+\frac{16}{d}} \exp\left(\frac{\eta d^2}{2}+2\sqrt{2\eta d}\right)\sqrt{2\pi \eta d^2}+\exp\brac{\frac{9}{4}+\frac{12}{d}}. 
\end{align*}
Plugging $\eta=1/d^2$ into the above formula yields
\begin{align*}
     \E_{\pi^Y}[n_y] &\le \exp\brac{\frac{9}{4}+\frac{16}{d}}\exp\brac{\frac{1}{2}+2\sqrt{\frac{2}{d}}}\sqrt{2\pi}+\exp\brac{\frac{9}{4}+\frac{12}{d}}\\
&\le \sqrt{2\pi} \exp\brac{\frac{13}{4}+\frac{20}{d}} + \exp\brac{\frac{9}{4}+\frac{12}{d}},
\end{align*}
where we use the fact that $2\sqrt{2/d} \le 1/2 + 4/d$ in the last inequality.
Consequently, applying \eqref{warmstartimplication_separation} gives us the desired bound on the average number of proposals, i.e.,
\begin{align*}
    \mathbb{E}_{\mu^k}[n_y]\leq \sqrt{2\pi} M\exp\brac{\frac{13}{4}+\frac{20}{d}} + M \exp\brac{\frac{9}{4}+\frac{12}{d}}. 
\end{align*}

As the final part of this proof, let us show that 
\begin{align*}
    \mathcal{P}_2(x)\ge \mathcal{P}_3(x), \forall x\in \R^d
\end{align*}
where $\mathcal{P}_2$ and $\mathcal{P}_3$ are respectively defined at \eqref{def:P2} and \eqref{def:P3}. The proof follows the argument showing $\mathcal{P}_1\geq \mathcal{P}_2$ in Lemma~\ref{lem:comparealltheP}. 
 Plugging in $x=y$ for \eqref{consequence:triangle}, we obtain
\begin{align}
\label{seventhone}
    -\norm{y-\proj_K(y)}-\sqrt{\frac{2\eta}{d}}\leq -\norm{y-\hat{x}}. 
\end{align}
Next, by the triangle inequality and the estimate \eqref{bound:barxkandx*}, 
\begin{align*}
    \norm{y-\proj_K(y)}\leq \norm{y-\hat x}+\norm{\hat x -\proj_K(y)}\stackrel{\eqref{bound:barxkandx*}}{\leq} \norm{y-\hat x}+\sqrt{\frac{2\eta}{d}}
\end{align*}
which further implies 
\begin{align*}
    \norm{y-\proj_K(y)}^2&\leq \norm{y-\hat x}^2+2\norm{y-\hat x}\sqrt{\frac{2\eta}{d}}+\frac{2\eta}{d}\\
    &\stackrel{\eqref{bound:barxkandx*}}{\leq} \norm{y-\hat x}^2+2\brac{\norm{y-\proj_K(y) }+\sqrt{\frac{2\eta}{d}}}\sqrt{\frac{2\eta}{d}}+\frac{2\eta}{d}. 
\end{align*}
After rearranging, this becomes 
\begin{align}
\label{fifthone}
     \norm{y-\proj_K(y)}^2-2\norm{y-\proj_K(y)}\sqrt{\frac{2\eta}{d}}-\frac{6\eta}{d}\leq \norm{y-\hat{x}}^2.
\end{align} 
In view of the definition of $\mathcal{P}_2$ in \eqref{def:P2}, combining \eqref{fifthone} and \eqref{seventhone} yields
\begin{align*}
    \mathcal{P}_2(x)&\stackrel{\eqref{def:P2}}\geq \frac{1}{2\eta}\bigg( \norm{x-\hat{x}}^2+\norm{y-\proj_K(y)}^2-2\sqrt{\frac{2\eta}{d}}\norm{y-\proj_K(y)}-\frac{6\eta}{d} \\
    &\qquad\qquad\quad-2\sqrt{\frac{2\eta}{d}}\brac{\norm{x-\hat{x}}+\norm{y-\proj_K(y)}+\sqrt{\frac{2\eta}{d}}}-\frac{12\eta}{d} \bigg)\stackrel{\eqref{def:P3}}=\mathcal{P}_3(x). 
\end{align*}
This completes the proof. 
\QEDA

\subsection{A sampling subroutine}
\label{appendix:generatesampleseparation}

While $ \exp\brac{-\frac{1}{2\eta}\left(\|x-\hat x\|^2 - 2\sqrt{\frac{2\eta}{d} } \|x-\hat x\|\right)}$ is not proportional to a Gaussian density, generating one of its samples is straightforward since it can be turned into a one-dimensional sampling problem. We state here a generic procedure for this sampling problem. An explanation is given in Lemma~\ref{lem:proposalsampling} below. 
\begin{algorithm}[H]
	\caption{Sample $X\sim \exp\brac{-\frac{1}{2\eta}\left(\|x-\hat x\|^2 - 2\sqrt{\frac{2\eta}{d} } \|x-\hat x\|\right)}$ in Algorithm \ref{alg:RGO:uniform:separation}} 
	\label{alg:separation:Psamplingforseparation}
	\begin{algorithmic}
		\State 1. Generate $W\sim {\cal N}(0,I)$  and set $\theta = W/\|W\|$; 
		\State 2. Generate $r \propto r^{d-1} \exp\left(-\frac{(r-b)^2}{2\eta}\right)$ by Adaptive Rejection Sampling for one-dimensional log-concave distribution by \cite{gilks1992adaptive}. 
		\State 3. Output $X = \hat x + r \theta$.
	\end{algorithmic}
\end{algorithm}

\begin{lemma}
\label{lem:proposalsampling}
    Algorithm \ref{alg:separation:Psamplingforseparation} generates 
    \begin{align*}
        X\sim \exp\brac{-\frac{1}{2\eta}\left(\|x-\hat x\|^2 - 2\sqrt{\frac{2\eta}{d} } \|x-\hat x\|\right)}. 
    \end{align*}

\end{lemma}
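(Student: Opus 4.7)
The plan is a standard polar-coordinate decomposition. Set $b := \sqrt{2\eta/d}$ and write the target unnormalized density as
\[
    p(x) \;\propto\; \exp\!\left(-\frac{1}{2\eta}\bigl(\|x-\hat{x}\|^2 - 2b\|x-\hat{x}\|\bigr)\right).
\]
Completing the square in $\|x-\hat{x}\|$ gives $\|x-\hat{x}\|^2 - 2b\|x-\hat{x}\| = (\|x-\hat{x}\|-b)^2 - b^2$, and the $\exp(b^2/(2\eta))$ factor is a constant absorbed into the normalization. Thus $p(x) \propto \exp\!\left(-\frac{1}{2\eta}(\|x-\hat{x}\|-b)^2\right)$, a function that depends on $x$ only through $r:=\|x-\hat{x}\|$.

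Next, I would perform the change of variables $x = \hat{x} + r\theta$ with $r \in [0,\infty)$ and $\theta \in S^{d-1}$, whose Jacobian yields $\D x = r^{d-1}\,\D r\,\D\sigma(\theta)$, where $\sigma$ is the surface measure on $S^{d-1}$. Substituting,
\[
    p(x)\,\D x \;\propto\; r^{d-1}\exp\!\left(-\frac{(r-b)^2}{2\eta}\right)\D r \cdot \D\sigma(\theta).
\]
Because the right-hand side factors into an $r$-part and a $\theta$-part, the corresponding probability measure is the product of the radial density $p_R(r) \propto r^{d-1}\exp(-(r-b)^2/(2\eta))$ on $[0,\infty)$ and the uniform distribution on $S^{d-1}$. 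Hence it suffices to sample $R$ and $\Theta$ independently from these marginals and output $X = \hat{x} + R\Theta$.

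Finally, I would justify the two marginal samplers used by Algorithm~\ref{alg:separation:Psamplingforseparation}. The vector $\theta = W/\|W\|$ with $W\sim\mathcal{N}(0,I_d)$ is a classical representation of the uniform distribution on $S^{d-1}$, by rotational invariance of the standard Gaussian. For the radial coordinate, the density $r^{d-1}\exp(-(r-b)^2/(2\eta))$ is log-concave on $[0,\infty)$ (it is the product of the log-concave function $r^{d-1}$ with a Gaussian, and $(d-1)\log r - (r-b)^2/(2\eta)$ is concave for $d\ge 1$, $r>0$), so the Adaptive Rejection Sampling method of \cite{gilks1992adaptive} is applicable. The only point that needs a brief check is log-concavity on $(0,\infty)$, which follows by differentiating twice: the second derivative of $(d-1)\log r - (r-b)^2/(2\eta)$ is $-(d-1)/r^2 - 1/\eta < 0$. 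Combining the independence argument above with the correctness of the two marginal samplers completes the proof.
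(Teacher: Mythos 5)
Your proof is correct and follows essentially the same route as the paper's: complete the square, pass to polar coordinates with Jacobian $r^{d-1}\,\D r\,\D\sigma(\theta)$, observe the radial/angular factorization, and verify log-concavity of the radial density by checking the sign of its second log-derivative so that adaptive rejection sampling applies. No gaps.
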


\begin{proof} By completing the square, we can see that 
\begin{align*}
    X\sim \rho(x)\propto \exp\left(-\frac{1}{2\eta}\left(\|x-\hat x\| - \sqrt{\frac{2\eta}{d}}\right)^2\right). 
\end{align*}
Let us rewrite $\rho(x)$ in polar coordinate. Set $r = \|x-\hat x\|$ and $b = \sqrt{\frac{2\eta}{d}}$. Since $\D x = r^{d-1}\D r \D\sigma(\theta)$ where $\D\sigma(\theta)$ is the surface measure of the unit sphere, we have for $r\ge 0$ and $\theta \in \mathbb{S}^{d-1}$,
\[
\rho(x)=p(r,\theta) \propto r^{d-1} \exp\left(-\frac{(r-b)^2}{2\eta}\right). 
\]
Notice the first marginal of $p$ is $p_r(r) \propto r^{d-1} \exp\left(-\frac{(r-b)^2}{2\eta}\right)$. Due to the fact that $\log p_r(r)=(d-1) \log r-\frac{(r-b)^2}{2\eta} + \mbox{const}$ and $\frac{\D^2}{\D r^2} \log p_r(r)=-\frac{d-1}{r^2}-\frac{1}{\eta}<0$, $p_r$ is a one-dimensional log-concave density.

Per the previous paragraphs, one can use any standard one-dimensional log-concave sampler, for instance \cite{gilks1992adaptive}, to sample $r\sim p_r$. Then one performs uniform sampling on the $d$-dimensional unit sphere by sampling $W\sim {\cal N}(0,I)$ and setting $\theta = W/\|W\|$ \cite{muller1959note}. Finally, one outputs $X = \hat x + r \theta$ as the sample for $p(r,\theta)=\rho(x)$. 
\end{proof}

\section{Examples of convex bodies and their oracle implementations}
\label{appen_examplesconvexbodies}

This section provides details on three types of convex bodies, namely general Z-polytopes, general H-polytopes, and bounded spectrahedron, and their corresponding oracle implementations (i.e., membership, separation, and projection). In particular, the three oracles have the same arithmetic complexity for general Z-polytopes and bounded spectrahedron.

\paragraph*{General Z-Polytopes.}

A Z-polytope in $\mathbb{R}^d$ is a polytope that can be written as a Minkowski sum of finitely many line segments, or equivalently as a linear image of a hypercube:
\[
  Z   =  \left\{ \sum_{j=1}^m t_j v_j : t_j \in [-1,1]\right\}
   =  V[-1,1]^m,\qquad V = [v_1 \dots v_m]\in\mathbb{R}^{d\times m},
\]
see, e.g., \cite{Schneider2013}.

When $m=d$ and $V$ is invertible, $Z = V[-1,1]^d$ is a parallelepiped, and the map $U \sim \mathrm{Unif}([-1,1]^d)$, $X := VU$ is a bijection with constant Jacobian determinant, so $X$ is uniform on $Z$.
Thus parallelepipeds are as easy to sample from as hypercubes.
For general Z-polytopes, $t\mapsto Vt$ is not injective and uniform $t$ on $[-1,1]^m$ does not induce uniform measure on $Z$, so uniform sampling is non-trivial and typically done via random-walk or MCMC methods e.g., \cite{Chalkis2020}.

For general Z-polytopes, membership, separation, and projection oracles can all be implemented with comparable arithmetic complexity $\mathcal{O}((m+d)^3)$ using interior-point methods.

\begin{itemize}
  \item    A membership oracle for $Z$ takes $x\in\mathbb{R}^d$ and decides whether $x\in Z$.
  Membership reduces to the box-constrained feasibility problem: find $t\in\mathbb{R}^m$ such that $Vt = x$ and $-\mathbf{1}\le t\le \mathbf{1}$. This can be formulated as a linear program (LP) and solved by interior-point methods.
  Generic primal-dual interior-point methods for dense LPs have arithmetic complexity $\mathcal{O}((m+d)^3)$; see \cite{Nesterovnemir1994}.

  \item A separation oracle for $Z$ takes $x\in\mathbb{R}^d$ and, if $x\notin Z$, returns a halfspace $\{y:a^\top y\le b\}$ containing $Z$ but not $x$.
  If the LP above is infeasible, its dual (via Farkas' lemma) yields such a separating hyperplane (see, e.g., \cite[Exercise~5.23]{Boyd2004}).
  Thus a separation oracle can be implemented with essentially the same complexity as the LP-based membership oracle: one primal-dual LP solve, again $\mathcal{O}((m+d)^3)$.

  \item A projection oracle for $Z$ maps $x\in\mathbb{R}^d$ to $\Pi_Z(x) := \arg\min_{z\in Z}\|z-x\|_2$.
  Hence projection reduces to a convex quadratic program (QP) with box constraints.
  Such QPs admit polynomial-time interior-point methods with arithmetic complexity $\mathcal{O}((m+d)^3)$ for generic dense instances \cite{Nesterovnemir1994}.
\end{itemize}

\paragraph*{General H-Polytopes.}

An $H$-polytope in $\mathbb{R}^d$ has the form
\[
  P  =  \{x\in\mathbb{R}^d : Ax \le b\},
\]
where $A\in\mathbb{R}^{m\times d}$, $b\in\mathbb{R}^m$, with componentwise inequalities.

This class includes some simple convex bodies with explicit uniform samplers: (a) the hypercube $[-1,1]^d$, sampled by drawing coordinates independently from $\mathrm{Unif}[-1,1]$; (b) the standard simplex $\Delta^{d-1} = \{x\in\mathbb{R}^d : x_i\ge 0,\ \sum_i x_i = 1\}$, sampled via independent exponentials followed by normalization; see \cite[Chapter 9]{Devroye1986}. 
These examples have special product or radial structure. For a general H-polytope with no special structure, uniform sampling usually relies on MCMC (ball walk, hit-and-run, Dikin walk, etc.; see \cite{dyer1991random,lovasz2006fast}).

For general H-polytopes, membership and separation oracles have complexity $\mathcal{O}(md)$, while projection is costlier.

\begin{itemize}
  \item   A membership oracle takes $x\in\mathbb{R}^d$ and checks $Ax \le b$.
  Computing $Ax$ costs $\mathcal{O}(md)$ and comparisons cost $\mathcal{O}(m)$, so membership runs in $\mathcal{O}(md)$ time.

  \item   A separation oracle for $P$ takes $x\in\mathbb{R}^d$ and either certifies $x\in P$ or returns a halfspace containing $P$ but not $x$.
  If $x\notin P$, some inequality is violated, say $(Ax)_i > b_i$.
  Let $a := A_{i,\cdot}$ and $\beta := b_i$.
  Then every $y\in P$ satisfies $a^\top y \le \beta$ while $a^\top x > \beta$, so the hyperplane $\{y : a^\top y = \beta\}$ separates $x$ from $P$.
  Thus separation costs $\mathcal{O}(md)$, the same order as membership.

  \item   A projection oracle for $P$ maps $x\in\mathbb{R}^d$ to $ \Pi_P(x) := \arg\min_{z\in\mathbb{R}^d} \tfrac12\|z-x\|_2^2$ s.t. $Az\le b$. This is a convex QP with linear inequality constraints.
  Generic polynomial-time algorithms via interior-point methods have worst-case arithmetic complexity $\mathcal{O}((m+d)^3)$ for dense problems \cite{Nesterovnemir1994}.
\end{itemize}

\paragraph*{Bounded Spectrahedron.}

Let $\mathbb{S}^n$ be the space of real symmetric $n \times n$ matrices with Frobenius inner product $\langle A,B\rangle = \mathrm{Tr}(A^\top B)$.
The positive semidefinite cone $\mathbb{S}^n_+ := \{ X \in \mathbb{S}^n : X \succeq 0 \}$ is closed and convex \cite[Section~2.2.4]{Boyd2004}.

For a fixed radius $R>0$, consider the bounded spectrahedron
\[
  K_R := \{ X \in \mathbb{S}^n : X \succeq 0,\ \mathrm{Tr}(X) \le R \}.
\]
It is the intersection of $\mathbb{S}^n_+$ with an affine trace constraint, hence convex.
For $X \succeq 0$ we have $\|X\|_F^2 = \sum_i \lambda_i^2 \le (\sum_i \lambda_i)^2 = \mathrm{Tr}(X)^2 \le R^2$, so $K_R$ is bounded.
Being also closed, $K_R$ is compact.
Via the identification of $\mathbb{S}^n$ with $\mathbb{R}^d$, $d = n(n+1)/2$, equipped with $\|\cdot\|_F$, $K_R$ is a compact convex body in a Euclidean space.

Membership, separation, and projection oracles on $K_R\subset\mathbb{S}^n$ can all be implemented in time $\mathcal{O}(n^3)$, dominated by eigenvalue decompositions.

\begin{itemize}
  \item 
 Membership check is performed as followed: given $X \in \mathbb{S}^n$, compute an eigendecomposition $X = Q  \mathrm{diag}(\lambda_1,\dots,\lambda_n) Q^\top$ and check $\lambda_i \ge 0$ and $\sum_{i=1}^n \lambda_i \le R$.
  The sum is $\mathcal{O}(n)$, while symmetric eigendecomposition for dense $X$ costs $\mathcal{O}(n^3)$ \cite{trefethen2022numerical}, so membership is $\mathcal{O}(n^3)$.  

  \item  Regarding the separation oracle, 
  if $X \notin K_R$, either $X$ is not PSD (some $\lambda_{\min}<0$) or $X \succeq 0$ but $\mathrm{Tr}(X) > R$.

  \emph{PSD violation:}
  Let $(\lambda_{\min},v)$ be an eigenpair with $\lambda_{\min}<0$ and $\|v\|_2=1$, set $N := v v^\top$, and define $\varphi(Y) := \langle N,Y\rangle = v^\top Y v$.
  For any $Y \succeq 0$ we have $v^\top Y v \ge 0$, so every $Y \in K_R$ satisfies $\langle N,Y\rangle \ge 0$, i.e., $K_R \subseteq \{Y : \langle N,Y\rangle \ge 0\}$.
  For $X$ we have $\langle N,X\rangle = v^\top X v = \lambda_{\min} < 0$, so the hyperplane $\{ Y : \langle N,Y\rangle = 0 \}$ strictly separates $X$ from $K_R$.

  \emph{Trace violation:}
  If $X \succeq 0$ but $\mathrm{Tr}(X) > R$, take $N := I_n$ and $\psi(Y) := \langle N,Y\rangle = \mathrm{Tr}(Y)$.
  Every $Y \in K_R$ satisfies $\mathrm{Tr}(Y) \le R$, i.e.\ $\langle N,Y\rangle \le R$, so $K_R \subseteq \{Y : \langle N,Y\rangle \le R\}$.
  For $X$ we have $\langle N,X\rangle = \mathrm{Tr}(X) > R$, so the hyperplane $\{ Y : \langle N,Y\rangle = R \}$ strictly separates $X$ from $K_R$.

  In both cases, the separator uses eigeninformation already computed for membership, so separation also costs $\mathcal{O}(n^3)$.

  \item 
  For the Frobenius-norm projection $\Pi_{K_R}(V)$, we start by performing the eigendecomposition $V = Q  \mathrm{diag}(\lambda_1,\dots,\lambda_n) Q^\top$.
  Since $K_R$ is a spectral set (invariant under orthogonal conjugation), the projection keeps eigenvectors and modifies only eigenvalues: we have $\Pi_{K_R}(V) = Q \mathrm{diag}(\mu_1,\dots,\mu_n) Q^\top$
  where $\mu \in \mathbb{R}^n$ solves $ \min_{\mu \in \mathbb{R}^n} \sum_{i=1}^n (\mu_i - \lambda_i)^2$ s.t. $ \mu_i \ge 0,\ \sum_{i=1}^n \mu_i \le R$. This is the Euclidean projection of $\lambda$ onto $C_R := \{\mu \in \mathbb{R}^n : \mu \ge 0,\ \|\mu\|_1 \le R\}$, the nonnegative $\ell_1$-ball of radius $R$.
  Projection onto $C_R$ is equivalent to projection onto an $\ell_1$-ball with nonnegativity and can be done by the same algorithms used for simplex/$\ell_1$-ball projection, e.g., \cite{Duchi2008,Condat2015}, with $\mathcal{O}(n\log n)$ or near-linear complexity.

  Thus, a projection oracle for $K_R$ is computed as follows: 1) compute $V = Q  \mathrm{diag}(\lambda) Q^\top$ (cost $\mathcal{O}(n^3)$); 2) project $\lambda$ onto $C_R$ to obtain $\mu$; and 3) return $\Pi_{K_R}(V) = Q \mathrm{diag}(\mu) Q^\top$. The overall complexity dominated by the eigendecomposition is $\mathcal{O}(n^3)$.
\end{itemize}

\end{document}